\PassOptionsToPackage{x11names}{xcolor}
\documentclass[acmsmall]{acmart}
\AtBeginDocument{%
  }

\setcopyright{none}
\acmArticle{}
\renewcommand\footnotetextcopyrightpermission[1]{}  %
\AtBeginDocument{\pagestyle{plain}}

\usepackage[camera]{dtrt-acm}
\usepackage{amsmath,amsthm,amsfonts}

\usepackage{enumerate}
\usepackage{mathrsfs}
\usepackage{xcolor}
\usepackage{graphicx}
\usepackage{listings}
\usepackage{hyperref}
\usepackage{tikz}
\usetikzlibrary{shapes.geometric}
\usepackage{url}
\usepackage{bm}
\usepackage{xspace}
\usepackage{framed}
\usepackage[capitalize]{cleveref}
\usepackage{bbm}
\usepackage{enumitem}
\usepackage{multicol}
\usepackage{tablefootnote}
\usepackage{subcaption}
\usepackage{mdframed}
\usepackage{comment}
\usepackage{booktabs}
\usepackage{wrapfig}

\newcommand{\bs}{B} %
\newcommand{\spm}{S} %
\newcommand{\gp}{g} %
\newcommand{\bsmax}{\bs_\text{max}} %
 
\newcommand{\gpmin}{\gp_\text{min}} %
\newcommand{\spc}{s} %
\newcommand{\spp}{s_\text{used}} %

\newcommand{\gpu}{\bar{\gp}}

\newtheorem{theorem}{Theorem}[section]
\newtheorem{proposition}[theorem]{Proposition}

\theoremstyle{definition}

\theoremstyle{remark}
\newtheorem{remark}[theorem]{Remark}

\title{Blockspace Under Pressure: An Analysis of Spam MEV on High-Throughput Blockchains}

\author{Wenhao Wang}
\authornote{Part of this work was conducted during the Summer 2025 Research Internship at Category Labs.}
\affiliation{%
  \institution{Yale University, IC3}
  \country{USA}
}

\author{Aditya Saraf}
\authornotemark[1]
\affiliation{%
  \institution{Cornell University}
  \country{USA}
}

\author{Lioba Heimbach}
\affiliation{%
  \institution{Category Labs}
  \country{USA}
}

\author{Kushal Babel}
\affiliation{%
  \institution{Category Labs}
  \country{USA}
}

\author{Fan Zhang}
\affiliation{%
  \institution{Yale University, IC3}
  \country{USA}
}
\authorsaddresses{}
\date{}

\begin{document}

\begin{abstract}
On high-throughput, low-fee blockchains, a qualitatively new form of maximal extractable value (MEV) has emerged: searchers submit large volumes of speculative transactions, whose profitability is resolved only at execution time.
We refer to this as \emph{spam MEV}.
On major rollups, it can at times consume more than half of block gas, even though only a small fraction of probes ultimately results in a trade.
Despite growing awareness of this phenomenon, there is no principled framework for understanding how blockchain design parameters shape its prevalence and impact.

{We develop such a framework, modeling spam transactions competing for on-chain opportunities under a competitive equilibrium that drives their profits to zero, and deriving equilibrium spam volumes as a function of block capacity, minimum gas price, and the transaction fee mechanism.
Empirical evidence from Base and Arbitrum supports the model: spam grew sharply as block capacity was scaled up and fell when minimum gas prices were introduced.
Our analysis yields three main insights.
First, spam is always costly: when block capacity is scarce, it displaces users and drives up gas prices; as block capacity grows, it increasingly consumes execution resources, raising network externality, i.e., the cost of provisioning and processing blocks.
We show that spam takes an increasing share of each additional unit of block capacity, so capping it before all users are included creates a favorable trade-off: forgoing a small amount of user welfare eliminates disproportionate spam and externality.
Second, we extend the analysis to approximate priority fee ordering and show that this reduces spam, as spammers pay more to reach early block positions.
Third, as user demand grows and blockspace is scaled accordingly, spam's share of block capacity plateaus rather than growing indefinitely.
}

\end{abstract}

\begin{CCSXML}
<ccs2012>
   <concept>
       <concept_id>10002978.10002979</concept_id>
       <concept_desc>Security and privacy~Blockchain</concept_desc>
       <concept_significance>500</concept_significance>
       </concept>
       <concept>
       <concept_id>10002978.10003006.10003013</concept_id>
       <concept_desc>Security and privacy~Distributed systems security</concept_desc>
       <concept_significance>500</concept_significance>
       </concept>
 </ccs2012>
\end{CCSXML}

\ccsdesc[500]{Security and privacy~Distributed systems security}
\ccsdesc[300]{Security and privacy~Economics of security and privacy}
\keywords{Spam MEV, Maximal Extractable Value, Blockchain}

\maketitle
\thispagestyle{plain}

\section{Introduction}
\label{sec:intro}

Maximal extractable value (MEV) has been a central topic in blockchain research over the past several years~\cite{daian2020flash,babel2023clockwork,qin2021quantifying}. Much of this work has focused on the Ethereum Layer~1, which, for a long time, was the primary home of decentralized finance (DeFi) and thus the main arena for MEV extraction. On the Ethereum Layer~1, the most widely studied MEV strategies, such as sandwich attacks, cyclic arbitrage, liquidations, and non-atomic arbitrage, are precise and targeted: a searcher identifies a specific on-chain opportunity through off-chain computation, constructs a transaction to capture it, and submits it with high confidence of success. These strategies rely on the ability to observe pending transactions in a mempool, i.e., the public waiting area for transactions awaiting inclusion, or to monitor state changes with sufficient time to react. Today, competition among searchers on Ethereum is mediated by block builder auctions~\cite{heimbach2023pbs}, in which searchers bid for the right to capture an opportunity. Only the winning transaction is included on-chain; losing bids are largely filtered out before execution, keeping most failed MEV attempts off the blockchain.

A qualitatively different form of MEV has recently emerged on high-throughput blockchains. Rather than targeting a specific opportunity identified off-chain, searchers submit high volumes of speculative transactions whose profitability is resolved only at execution time.
We refer to this class of strategies as \textbf{spam MEV}.
Related work has called this phenomenon \emph{optimistic MEV}~\cite{solmaz2025optimistic} or \emph{probabilistic MEV}~\cite{mazorra2026timinggames}, emphasizing the searcher's uncertainty about whether a transaction will be profitable.
Our terminology instead emphasizes that they consume shared block space and infrastructure resources, whether or not each transaction succeeds.

In spam transactions, both the detection and execution of opportunities reside largely in on-chain smart-contract logic~\cite{solmaz2025optimistic}: a transaction probes whether a profitable opportunity exists at the moment of execution and, if so, captures it, while when no opportunity is found it may revert or simply consume gas, i.e., the computational cost of executing transactions on the blockchain, searching without ever executing a trade.
This encompasses a range of speculative strategies, of which two are especially common.
In \emph{cyclic arbitrage}, the most prominently studied, a bot sends a transaction that reads the live state of multiple DEX pools, checks whether any route is profitable, and executes the arbitrage only if the check succeeds.
In \emph{liquidation probing}, a bot calls a lending protocol to check whether some position has become liquidatable, and executes the liquidation only when the opportunity exists.

To illustrate the contrast, on Ethereum Layer~1 a cyclic arbitrage bot generally precomputes a profitable path and executes a single atomic transaction, whereas on many high-throughput chains, the same type of bot speculatively submits transactions without knowing whether an opportunity exists, repeatedly probing liquidity pools in the hope of capturing small gains and accepting a high failure rate in exchange for marginal profits.

First documented on Solana~\cite{umbra2022solana}, spam MEV has since been identified on Ethereum Layer~2 rollups, including Base, Optimism, and Arbitrum~\cite{solmaz2025optimistic,gogol2025first}. The scale of the phenomenon is striking: on Base and Optimism, spam MEV transactions consumed over 50\% of block gas in Q1~2025~\cite{solmaz2025optimistic}, yet only 6--12\% of these speculative probes result in an actual trade. On top of that, despite consuming more than half of on-chain gas, spam MEV transactions pay less than a quarter of total fees~\cite{solmaz2025optimistic}, as they carry low priority fees. %

What makes spam MEV viable and prevalent? A combination of architectural and economic features common to modern high-throughput chains likely plays a role. Low transaction fees reduce the cost of failed speculative transactions, making repeated probing profitable even at low success rates. Fast block times, often below one second on rollups, can leave insufficient time for searchers to observe state changes and submit targeted transactions before the next block is produced, favoring continuous speculative submission over deliberate off-chain computation.
Similarly, most high-throughput chains lack a public mempool, e.g., rollups use centralized sequencers and Solana forwards transactions directly to the current block producer, limiting the information available for targeted extraction. %
Designs such as encrypted mempools~\cite{agarwal2025efficiently,babel2024prof, fernando2025trx}, when deployed, are likely to only increase spam MEV by further reducing the information available for targeted MEV strategies.

While the prevalence of spam MEV has been documented, there is no principled framework to study the interplay between design parameters and spam volumes, and the impact of spam on various stakeholders.
As a result, the response from various blockchain designers has been largely reactive and ad-hoc (c.f. Section~\ref{sec:ecosystem-response}).

In this work, we fill this gap by developing a framework to determine the equilibrium volume of spam as a result of block space and fee design choices, and to analyze the impact of resulting spam on user welfare, validator revenue, and network externality, i.e., the cost of provisioning block capacity and processing transactions.%
We characterize the tradeoffs involved in design choices and provide guidance for blockchain designers to navigate these tradeoffs.

In the remainder of this section, we first highlight the industry's response to spam MEV, describe the design levers and metrics we focus on in this work, and then summarize our contributions.

\subsection{Ecosystem Response}
\label{sec:ecosystem-response}

The prevalence of spam MEV has not gone unnoticed. The blockchain community has become increasingly aware of the scale of the problem, and a debate has emerged about how harmful spam truly is and what should be done about it. Flashbots' thesis that MEV fundamentally limits scaling~\cite{miller-mev-scaling-2025}, i.e., that added throughput capacity is absorbed by spam rather than passed on to genuine users, has been particularly influential in framing the discussion.

The most visible case study is Base, the Ethereum Layer~2 rollup with the highest DeFi total value locked (TVL)~\cite{defillama-rollups-2025}. Following the Dencun upgrade, which drastically reduced Layer~2 data availability costs, Base increased its block gas limit. However, the additional capacity was largely consumed by spam MEV rather than benefiting genuine users~\cite{solmaz2025optimistic,gogol2025first}. As \Cref{fig:base-spam-growth} shows, spam gas grew disproportionately faster than non-spam gas as Base progressively raised its gas target after Dencun, absorbing the majority of the added capacity. When the community took notice~\cite{solmaz2025optimistic,gogol2025first,miller-mev-scaling-2025}, Base reduced the gas target from 70M to 50M. \Cref{fig:base-target-decrease} shows that comparing the 30 days before and after the change, spam gas fell by 34\% and non-spam gas by 24\%, i.e., both declined but spam absorbed a larger share of the reduction. Beyond block space consumption, spam also imposes costs on the broader network infrastructure. Multiple blockchain indexing services, including Etherscan, announced they would discontinue free API access for Base, citing the significant infrastructure costs driven by the high transaction volumes~\cite{lefterisjp-base-indexing-2025, ijaack94-indexer-base-2025, etherscan-base-api-2024}. Base ultimately raised its minimum transaction fee, explicitly citing spam reduction as the rationale~\cite{base-fee-increase-2025}. As \Cref{fig:base-spam-decline} shows, spam gas on Base trended downward thereafter, coinciding with successive increases in the protocol minimum gas price, though other factors may have also contributed.

\begin{figure}[h!]
    \centering
    \begin{subfigure}[t]{0.49\linewidth}
        \centering
        \includegraphics[scale=0.9]{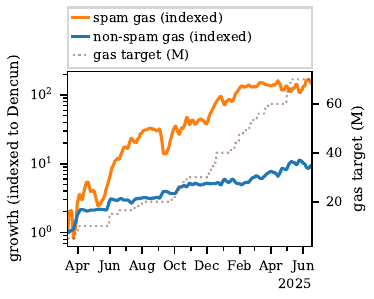}
        \caption{Spam and non-spam gas on Base indexed to Dencun (March~13, 2024). Spam gas grew 122$\times$ from Dencun to its peak (April~2025), while non-spam gas grew 11.2$\times$.}
        \label{fig:base-spam-growth}
    \end{subfigure}
    \hfill
    \begin{subfigure}[t]{0.49\linewidth}
        \centering
        \includegraphics[scale=0.9]{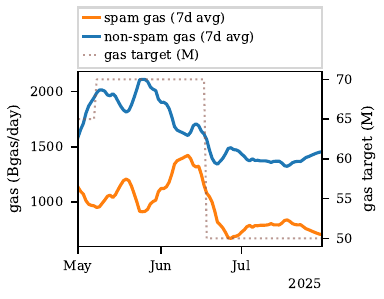}
        \caption{Spam and non-spam gas on Base around the gas target reduction from 70M to 50M on June~18, 2025. In the 30 days after the change, spam gas fell by 34\% and non-spam gas by 24\%; spam's share of total gas dropped from 36\% to 32\%.}
        \label{fig:base-target-decrease}
    \end{subfigure}
    \caption{Spam and non-spam gas on Base over time (7-day moving averages).}
    \label{fig:base-throughput}
\end{figure}

Other chains have moved in the same direction. Arbitrum, the rollup with the second-highest DeFi TVL~\cite{defillama-rollups-2025}, had set a minimum gas price meaningfully above 1~wei well before spam attracted broader attention, and recently doubled this floor as part of a broader fee mechanism overhaul~\cite{arb-fee-tweet-2025, arbitrum-dia-2026}. Outside the EVM ecosystem, Aptos similarly raised its minimum gas fee, citing spam reduction~\cite{aptos-gas-fee-2025}.
Newer chains have launched with spam in mind: Monad, for instance, sets a non-trivial minimum gas price and charges based on the gas limit rather than gas consumed, making spam MEV with large unused gas allocations more costly~\cite{categorylabs-monad-spec-2025}.

\begin{figure}[t]
    \centering
    \includegraphics[scale=0.9]{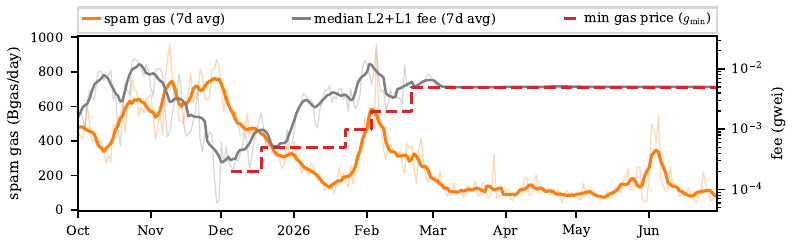}
    \caption{Daily spam gas on Base from October~2025 onward (7-day moving average), alongside the median fee and protocol minimum gas price (log scale, right axis). Average spam gas fell from 450~Bgas/day in October~2025 to 302~Bgas/day in February~2026 ($-33\%$).}
    \label{fig:base-spam-decline}
\end{figure}

These interventions share a common intuition: spam is an unwanted form of resource abuse that reduces useful throughput and imposes costs on the network.
Yet the responses have been largely reactive and chain-specific, without formal analysis of how design parameters interact to determine the equilibrium level of spam.
A key factor is that, unlike Ethereum~L1 where congestion pricing naturally emerges, high-throughput chains can have slack capacity where the market-clearing gas price approaches zero, while the network still bears the cost of processing every transaction.
A minimum gas price $\gpmin$ ensures that each transaction pays a marginal cost and, as our analysis shows, serves as a key lever against spam in this regime.

\subsection{Design Levers}
\label{subsec:design-levers}

We now describe the blockchain design choices that influence the presence and characteristics of spam transactions. These are the parameters that our framework analyzes.

\parhead{Block Space Limit $\bsmax$}
The block space limit affects inclusion costs.
When the limit $\bsmax$ is low relative to transaction demand, competition for inclusion in the block intensifies, increasing gas prices $\gp$ and may therefore reduce the expected revenue of each spam transaction.
Conversely, when $\bsmax$ is large enough to accommodate all transactions, gas prices $\gp$ decrease, and therefore increase the expected profit of spam transactions.

\parhead{Minimum Gas Price $\gpmin$}
A blockchain can impose a minimum gas price $\gpmin$ per transaction, independent of block space availability. Higher values of $\gpmin$ increase the cost when block capacity is slack, influencing the economic outcome for both spam and non-spam transactions. Without a gas price floor, the equilibrium gas price $\gp$ would approach zero when block capacity exceeds demand, resulting in a disproportionately high volume of spam transactions.

\parhead{Transaction Fee Mechanism}
The design of the transaction fee mechanism (TFM) impacts transaction submission behavior.
In a blockchain, the TFM determines which transactions are included, how much they pay, and, when applicable, how they are ordered within the block.
For instance, spam transactions often specify a large gas limit but typically consume only a fraction of it when executed without profitable arbitrage opportunities~\cite{solmaz2025optimistic}.
Therefore, a TFM that imposes higher costs for specifying large gas limits can change the spam MEV strategy.
Additionally, a TFM can dictate the ordering of transactions within a block, such as through priority-based ordering, which affects the resulting spam volumes.

\subsection{Our Framework and Contributions}
\label{subsec:contributions}

We develop a framework for analyzing spam MEV and its interaction with blockchain design choices.
We model spam transactions as competing for on-chain opportunities, and study how equilibrium spam volume depends on the block space limit $\bsmax$, the minimum gas price $\gpmin$, and the transaction fee mechanism.
We evaluate the impact of spam on user welfare, validator revenue, and network externality.
Our main results are:

\begin{enumerate}[leftmargin=*]
    \item \textbf{Closed-form equilibrium spam volume and gas prices} (\cref{subsec:equilibrium-random}): We characterize spam equilibrium under random transaction ordering. Depending on block capacity and the minimum gas price, the outcome is either no spam entry, a slack equilibrium pinned down by $\gpmin$, or a congested equilibrium in which spam raises the gas price.

    \item \textbf{User welfare, revenue and externality characterization along with trade-offs} (\cref{subsec:welfare-analysis}): We analyze user welfare, validator revenue, and network externality in equilibrium.

    \item \textbf{Parameter-setting guidance} (\cref{subsec:parameter-guidance}): We characterize the choice of $(\bsmax, \gpmin)$, and show how to use the proportion of spam at the margins as a practical rule for selecting the design parameters without admitting disproportionate spam.

    \item \textbf{Priority fee ordering (PFO)} (\cref{sec:priority-fee-ordering}): We extend the analysis to an approximate PFO TFM. We show that PFO can reduce spam by making earlier positions in the block more expensive. %

    \item \textbf{Demand-scaling analysis} (\cref{sec:demand-scaling} and \cref{sec:empirical}): We study whether spam limits blockchain scaling~\cite{miller-mev-scaling-2025}. 
    We estimate that MEV opportunity size grows approximately linearly with non-spam activity using empirical data.
    We show that as demand grows, spam remains a nontrivial share of included gas, but its share plateaus rather than growing without bound.
    
    \item \textbf{Empirical evidence and case studies} (\cref{sec:empirical}): We present case studies of spam on Base and Arbitrum. Our findings support several model predictions: spam falls when gas price floors rise, grows with block capacity, absorbs disproportionate marginal block space, and, on Base, appears later in the block as predicted by the PFO analysis.

    \item \textbf{Mitigations} (\cref{sec:mitigations}): We discuss mitigations with the insights from our model and analysis.

\end{enumerate}

\section{Model}
\label{sec:modeling}

We present our model for analyzing spam MEV.
The design levers (block space limit $\bsmax$, minimum gas price $\gpmin$, and transaction fee mechanism) were introduced in \cref{subsec:design-levers}.
We consider blockchains run by validators; we refer to the parties that send transactions as users.
There are two types of users in the model: genuine users and spam bots.
Genuine users submit transactions because they receive intrinsic utility from execution.
Spam bots, by contrast, are strategic agents that submit transactions in order to search and capture MEV opportunities that are determined at the time of execution. They have no intrinsic utility, and their transaction volume is determined endogenously in equilibrium.

We analyze TFMs that charge a transaction for its gas limit rather than gas used, and block size is counted in terms of transaction gas limits, similar to high-throughput blockchains such as Solana and Monad~\cite{categorylabs-monad-spec-2025}. 
Our results can be straightforwardly adapted to a mechanism that charges only for the used gas.
For our initial analysis in Section~\ref{sec:random-ordering}, the TFM orders transactions in a random order, mimicking the operations of a low-latency blockchain. We extend our analysis to priority-fee based ordering in Section~\ref{sec:priority-fee-ordering}.

Now we formalize block space and spammers' utility model and define the metrics we use to evaluate equilibrium outcomes.

\subsection{Block Demand and Spam Model}
\label{subsec:utility-model}

We analyze transaction dynamics within a single block.
Let $\bs\le \bsmax$ denote the amount of block space actually occupied in the block.
Let $Q_u\le \bs$ represent the amount of included user gas, i.e., gas consumed by genuine (non-spam) transactions.
The demand from genuine users follows a demand function $D(\gp)$, which specifies how much genuine-user gas is demanded at any given gas price $\gp$.
This demand curve does not include spam demand.
Each spam transaction carries a gas limit of $\spc$, so if $\spm$ spam transactions are included, occupied block space decomposes as
$
\bs = Q_u+\spc\spm.
$
Equivalently, $\spm=(\bs-Q_u)/\spc$ in terms of $\bs$ and $Q_u$.

In the TFM model used in \cref{sec:random-ordering}, the TFM gives a clearing price $\gp^* \geq \gpmin$ that is paid by all included transactions.
In the PFO model in \cref{sec:priority-fee-ordering}, the same logic applies sub-block by sub-block, with different clearing prices for different execution positions.
All included transactions, including genuine user transactions that create opportunities and spam transactions that try to capture them, are executed in the order determined by the TFM.
Let $\mathcal{R}$ denote the set of arbitrage opportunities created during this ordered execution, and let $r_\ell$ be the value of opportunity $\ell\in\mathcal{R}$.
The total opportunity value in the block is
$
r:=\sum_{\ell\in\mathcal{R}} r_\ell.
$
A spam transaction can capture the opportunities that have appeared earlier in the execution order and have not already been captured by a previous spam transaction.
Equivalently, each opportunity is assigned to the first spam transaction sequenced after the genuine user transaction that creates it; if no later spam transaction appears, that opportunity remains unclaimed.
We model the total opportunity size $r$ to be increasing in the amount of included user gas.

We assume a free-entry environment for spam bots: any bot can submit spam transactions.
There is no fixed cost of becoming a spammer, and the spammer only pays for the transaction fees of submitted spam transactions.
Strategic spam bots enter as long as their expected net payoff is non-negative.
At equilibrium, included spam transactions earn zero expected net profit (full rent dissipation~\cite{hillmansamet1987,posner1975}), and an additional spam transaction cannot profitably enter.
Given a TFM and an ordering rule, the clearing price and the amount of spam are endogenously determined by genuine user demand function and this zero-profit entry condition.

The utility of a spam transaction is its realized MEV revenue minus the fees it pays.
Its realized MEV revenue is the total value of all opportunities for which it is the first spam transaction sequenced after the opportunity is created.
In the single-price model, if the transaction pays clearing price $\gp^*$, its utility is therefore
$
\sum_{\ell\in\mathcal{R}_{\mathrm{cap}}} r_\ell-\spc\cdot\gp^*,
$
where $\mathcal{R}_{\mathrm{cap}}$ is the set of opportunities captured by that spam transaction.
In the PFO model, the fee term uses the clearing price of the sub-block in which the spam transaction is included.
The utility of a genuine user transaction is defined as $u-f$ if included in the block, and $0$ otherwise, where $u$ is its valuation of execution and $f$ is the fee it pays.
For included genuine users, $f\le u$.
Looking ahead, user valuations are represented by the demand function, while fees are determined by the TFM.

{\begin{remark}[Modeling choices]
\label{rem:modeling-choices}
Our model is built to isolate the effect of block capacity, minimum gas prices, and ordering mechanisms on the spam volume.
The linear demand curve gives closed-form expressions.
We note that the same analysis methods extend to other decreasing demand curves, and we discuss the exponential demand case in \cref{app:exp-demand}.
Random ordering is used as the baseline because it captures the idea that, on low-latency chains, a spam transaction may obtain an early execution position without paying a high priority fee.
Our PFO model in \cref{sec:priority-fee-ordering} then studies how the results change when earlier positions in blocks are priced.

In our model, spam transactions are charged by gas limit, which matches chains where reserved execution capacity is the relevant scarce resource.
The analysis of when transactions are charged based on the gas used is discussed in \cref{app:gas-used-charging}.
\end{remark}}

\subsection{Metrics}
\label{subsec:metrics}
We now describe the metrics used to quantify the impact of spam transactions on the blockchain.
In addition to the welfare and cost metrics below, we analyze the {\em spam share} of included gas, defined as the fraction of included gas consumed by spam transactions.
A higher spam share means that a larger fraction of the block is unavailable to genuine users.
It is therefore a direct measure of the resource-abuse aspect of spam MEV.

\begin{wrapfigure}{r}{0.5\linewidth}
\centering
\begin{tikzpicture}[scale=0.6, yscale=0.8, every node/.style={font=\tiny}]
  \draw[->] (0,0) -- (7,0) node[right] {Gas price};
  \draw[->] (0,0) -- (0,5.5) node[above] {Gas demanded};

  \def\Dzero{5}
  \def\xint{6}
  \def\gmin{1.5}
  \def\gclear{3}

  \draw[dashed, thick, blue] (0,\Dzero) -- (\gmin, {\Dzero - \Dzero*\gmin/\xint});

  \draw[thick, blue] (\gmin, {\Dzero - \Dzero*\gmin/\xint}) -- (\xint, 0);

  \fill[blue!15] (\gclear, {\Dzero - \Dzero*\gclear/\xint}) -- (\xint, 0) -- (\gclear, 0) -- cycle;

  \draw[densely dashed, red] (\gmin, 0) -- (\gmin, {\Dzero - \Dzero*\gmin/\xint});
  \node[below, red] at (\gmin, -0.1) {$\gpmin$};

  \draw[densely dashed, black!60!green] (\gclear, 0) -- (\gclear, {\Dzero - \Dzero*\gclear/\xint});
  \node[below, black!60!green] at (\gclear, -0.1) {$\gp^*$};
  \node[below, black!60!green, font=\scriptsize] at (\gclear, -0.55) {(clearing price)};

  \node[left] at (0, \Dzero) {$D_0$};

  \node[right, blue] at ({\xint*0.6}, {\Dzero*(1 - 0.6) + 0.3}) {$D(\gp) = D_0 - \beta \gp$};

  \node at ({\gclear + 1.1}, {(\Dzero - \Dzero*\gclear/\xint)*0.35}) {User};
  \node at ({\gclear + 1.1}, {(\Dzero - \Dzero*\gclear/\xint)*0.35 - 0.4}) {welfare};

  \draw[dotted, gray] (0, {\Dzero - \Dzero*\gclear/\xint}) -- (\gclear, {\Dzero - \Dzero*\gclear/\xint});
  \node[left, gray] at (0, {\Dzero - \Dzero*\gclear/\xint}) {$D(\gp^*)$};

  \draw[dotted, gray] (0, {\Dzero - \Dzero*\gmin/\xint}) -- (\gmin, {\Dzero - \Dzero*\gmin/\xint});
  \node[left, gray] at (0, {\Dzero - \Dzero*\gmin/\xint}) {$D(\gpmin)$};

  \node[below] at (\xint, -0.1) {$D_0/\beta$};
\end{tikzpicture}
\caption{Linear demand curve $D(\gp) = D_0 - \beta \gp$ for genuine users. At clearing price $\gp^*$, the shaded triangle is the user welfare (aggregate surplus of included users).}
\label{fig:demand-curve}
\end{wrapfigure}

\parhead{User welfare}
The user welfare is defined as the aggregate utility of all genuine users.
As shown in \cref{fig:demand-curve} for a linear demand curve, the user welfare is the area under the genuine-user demand curve above the clearing price $\gp^*$.
A higher user welfare indicates a greater overall benefit to users of the blockchain.
Spam can reduce user welfare by reducing block space for genuine users, or by increasing the inclusion price and pricing out users with lower valuations.

\parhead{Validator revenue}
Validator revenue is the total gas fees collected from all included transactions (both users and spammers), written as
$R = \gp^* \cdot \bs$,
where $\gp^*$ is the equilibrium clearing gas price and $\bs$ is total gas used.
This quantity represents the direct income to the network from transaction processing.
For the purposes of this work, we do not consider burning of fees or block rewards, as they are merely a way to change allocation of revenue between token holders and validators, and do not affect the net economic incentives for spamming or the overall welfare analysis.

\parhead{Network externality}
The network externality is the cost borne by the network for provisioning block capacity and processing transactions, written as
$
E(\bsmax) = c_1  \bsmax + c_2 \bs,
$
where $c_1$ is the per-unit cost of providing block capacity (e.g., bandwidth provisioning, data availability, node hardware), and $c_2$ is the per-unit cost of computation (e.g., execution, state access).
The capacity component $c_1  \bsmax$ depends on the provisioned block capacity, not actual usage, while the compute component $c_2  \bs$ scales with actual gas consumed.
Network externality captures costs such as increased barriers to decentralization and additional pressure on infrastructure operators (e.g., full nodes), which are not easily quantifiable but are important considerations in blockchain design, especially as it relates to spam transactions.
In this sense, externality captures the security and reliability cost of allowing high-volume traffic that consumes shared blockchain resources.
\footnote{Buterin~\cite{buterin2018resource},drawing analogy to pollution, puts forth a similar approach to modeling the externality of block sizes on the ecosystem, where the externality is linear in block size for modern blockchains.}

\section{Analysis with Random Ordering}
\label{sec:random-ordering}

In this section, we study a simple random-ordering TFM, which is intentionally minimal.
There is a block of capacity $\bsmax$, and a collection of on-chain opportunities that spam transactions try to claim.
Each spam transaction reserves $\spc$ gas and is charged based on its gas \emph{limit}, not on ex post gas used.
In this section, we use the simple linear demand function ($D(\gp)=D_0-\beta \gp$) for non-spam users' gas consumption so that we can have closed-form equilibrium expressions.
We include results under exponential demand in \cref{app:exp-demand}.
Recall that $Q_u$ denotes the amount of user gas included in the block, and $r$ denotes the aggregate value of all opportunities in the block.
We assume that this aggregate opportunity size is linear in the included user gas given by $r = \frac{Q_u}{D_0} \cdot r_0$.
Finally, we impose a floor $\gpmin$, which lets us study the design choice of a minimum inclusion price.

\subsection{Spam Volume at Equilibrium}
\label{subsec:equilibrium-random}

The first quantity we need is the expected MEV revenue of each spam transaction.
We start with the case of a single opportunity, and then show that multiple non-interacting opportunities reduce to the same expression with $r$ equal to their aggregate value.

Suppose first that there is a single opportunity of value $\hat r$ and that there are $\spm$ spam transactions in the block.
Only the relative position of the opportunity among these $\spm$ spam transactions matters.
For example, when $\spm=3$, we can visualize the situation by placing the three spam
transactions as triangles and considering the four possible slots of the opportunity:
\[
\begin{tikzpicture}[baseline=(current bounding box.center)]
  \foreach \x in {1,3,5} {
    \node[
      draw,
      fill=gray!20,
      regular polygon,
      regular polygon sides=3,
      minimum size=0.6cm,
    ] at (\x,0) {};
  }

  \node[circle, draw, inner sep=1.2pt] at (0,0) {\scriptsize 1};
  \node[circle, draw, inner sep=1.2pt] at (2,0) {\scriptsize 2};
  \node[circle, draw, inner sep=1.2pt] at (4,0) {\scriptsize 3};
  \node[circle, draw, inner sep=1.2pt] at (6,0) {\scriptsize 4};

  \node at (0,-0.3) {\scriptsize \checkmark};
  \node at (2,-0.3) {\scriptsize \checkmark};
  \node at (4,-0.3) {\scriptsize \checkmark};
  \node at (6,-0.3) {\scriptsize \(\times\)};
\end{tikzpicture}
\]
The circles represent the possible positions where the opportunity can appear relative to the spam transactions.
In the first three positions, there is at least one spam transaction after the opportunity, so a spam transaction can claim it.
In the last position, the opportunity appears after all spam transactions, so it cannot be claimed.
More generally, there are $\spm+1$ equally likely relative positions for the opportunity, and exactly one of them fails.
Therefore,
$$
\Pr[\text{opportunity claimed by spam}]=\frac{\spm}{\spm+1}.
$$
The expected total value captured from this single opportunity is therefore $\hat r \cdot \frac{\spm}{\spm+1}$.
Since spam transactions are symmetric, the expected revenue of each spam transaction is
$
\frac{\hat r}{\spm+1}
$.

Now suppose that there are multiple opportunities in the block, indexed by $\ell=1,\dots,m$, with values $r_\ell$.
We assume these opportunities are non-interacting: their values add, capturing one opportunity does not change the value or feasibility of another, and each spam transaction can attempt to claim any opportunity that appears before it.
For each opportunity $\ell$, the same argument of expected revenue captured by each spam transaction applies.
By linearity of expectation, the expected aggregate value captured by spam is
$
\frac{\spm}{\spm+1}\sum_{\ell=1}^{m}r_\ell.
$
By symmetry across spam transactions, the expected revenue of each spam transaction is therefore
$
\frac{1}{\spm+1}\sum_{\ell=1}^{m}r_\ell
$.
Therefore, for the random-ordering analysis, multiple non-interacting opportunities are equivalent to a single aggregate opportunity of value
$
r:=\sum_{\ell=1}^{m}r_\ell
$.

We next compute the clearing price at a fixed spam volume.
Recall that the user demand curve is characterized by $D = D_0 - \beta \cdot \gp$ where $\gp$ is the gas price.
In the absence of spam, the market-clearing price implied by the demand curve and block size $\bsmax$ is
$\gp_1 := \frac{D_0 - \bsmax}{\beta}$.
However, since the protocol enforces a price floor $\gpmin$, the actual clearing price without spam is $\hat{\gp} :=\max\{\gpmin,\gp_1\}$.
If $\spm$ spam transactions are included in the block and each reserves $\spc$ gas, then the remaining space available to non-spam users is $\bsmax - \spm \spc$.
The resulting clearing price is
$\gp(\spm)=\max\left\{\gpmin,\gp_1+\frac{\spc}{\beta}\spm\right\}$.
The term $\frac{\spc}{\beta}\spm$ captures the price increase caused by spam taking up block space, and the outer maximum applies the gas price floor.

Given this price, the expected utility of each spam transaction at spam volume $\spm$ is
$
u(\spm)=\frac{1}{\spm+1}r- \spc \gp(\spm).
$
The first term is the expected MEV revenue of each spam transaction, where $r$ is the aggregate value of all opportunities in the block.
The second term is the inclusion cost.
Recall that in a competitive equilibrium, spammers enter until the expected utility of each spam transaction is $0$ (see \cref{subsec:utility-model}).
Solving the zero-profit condition ($u(\spm) = 0$) gives
\[
\spm^\ast=
\begin{cases}
0, & r_0\frac{Q_u^0}{D_0}\le \spc \hat{\gp},\\
 \frac{r_0 D(\gpmin)}{D_0 \spc \gpmin}-1, &  r_0\frac{Q_u^0}{D_0}>\spc \hat{\gp} \text{ and }\hat{\gp} = \gpmin,\\
 \frac{\sqrt{(\spc-\Delta+\beta r_0/D_0)^2+4\beta r_0}-(\spc+\Delta+\beta r_0/ D_0)}{2\spc}, & \text{otherwise},
\end{cases}
\]
where $\Delta=D_0-\bsmax$, $\hat{\gp}=\max\{\gpmin,\gp_1\}$ is the clearing price in the absence of spam, and
$
Q_u^0=\min\!\left\{\bsmax,\;D(\gpmin)\right\}
$
is the amount of user gas included in the absence of spam.

The first case says that spam does not enter when the effective aggregate opportunity value in the spam-free world is below the per-transaction inclusion cost.
The second case is the slack-at-the-floor regime, where the gas price remains $\gpmin$, and the block is not congested.
In this regime, spam depends on $r_0$, $\gpmin$, and the amount of non-spam demand that remains at the floor.
The third case is the congested regime, where spam raises the clearing price above the floor.
In this regime, spam depends on the baseline opportunity parameter $r_0$, the block size, and user demand.

\begin{remark}[Connection to Mazorra et al.]
\label{rem:mazorra}
In the slack case, $\spm^\ast = r/(\spc \gpmin) - 1 = 1/c - 1$ where $c = (D_0 \spc \gpmin)/(r_0\cdot D(\gpmin))$ is the cost-to-reward ratio.
This matches the lower bound on total spam from the timing game model of Mazorra et al.~\cite{mazorra2026timinggames}, who show that equilibrium spam in an $n$-player timing game lies in $[1/c-1, 1/c]$ and converges to $1/c-1$ as $n\to\infty$.
Our competitive equilibrium corresponds to this large-$n$ limit derived using a completely different approach.
\end{remark}

For our welfare analysis later, we now derive the minimum blockspace, denoted by $B_{\mathrm{plat}}$, needed for the clearing gas price to remain at the floor $\gpmin$.
Looking ahead, the benefit of provisioning blockspace plateaus out beyond $B_{\mathrm{plat}}$.
$B_{\mathrm{plat}}$ can be written as
$
B_{\mathrm{plat}}
=
D(\gpmin)+\left(\frac{r_0 D(\gpmin)}{D_0 \gpmin}-\spc\right)_+.
$
In this equation, the first term is the amount of non-spam demand that remains at the floor price.
The second term is the equilibrium spam volume at the floor, converted into gas units.
Therefore, $B_{\mathrm{plat}}$ is the amount of block space needed to fit both user demand at $\gpmin$ and the spam that is still profitable at that price.
For $\bsmax<B_{\mathrm{plat}}$, the block is still effectively scarce, so adding capacity lowers the clearing price and increases the spam volume.
Once $\bsmax$ reaches $B_{\mathrm{plat}}$, the gas price settles at the $\gpmin$.
At that point, spam equilibrium volume reaches its peak level and no longer changes with further increases in block capacity $\bsmax$.
\Cref{fig:spam-volume} shows this pattern.
\begin{figure}[h]
    \centering
    \includegraphics[width=0.8\linewidth]{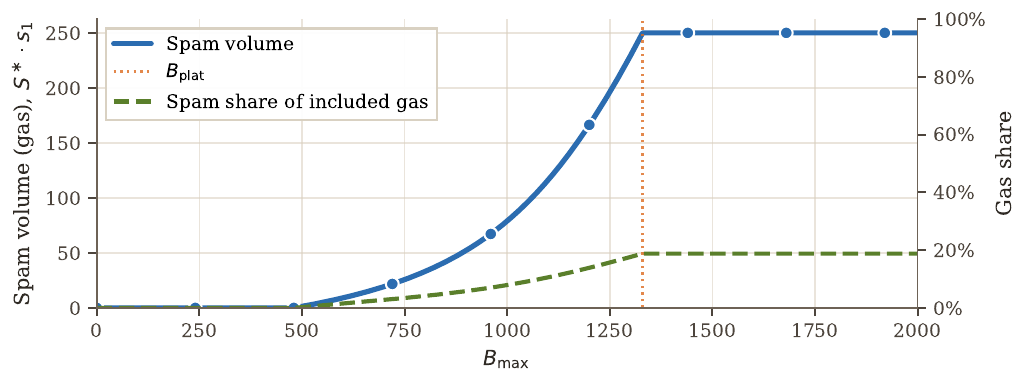}
    \caption{Equilibrium spam volume as a function of block size $\bsmax$, and the gas share of spam transactions out of total included gas.
    Spam is $0$ when the block is small and the clearing price is high.
    As $\bsmax$ grows, entry becomes more profitable and spam begins to rise.
    Once $\bsmax$ reaches $B_{\mathrm{plat}}$, the gas price is pinned at $\gpmin$ and spam plateaus.
    In this figure, $D_0=1200$, $\beta=6$, $\spc=20$, $r_0=6000$, and $\gpmin=20$.
    }
    \label{fig:spam-volume}
\end{figure}

\subsection{Welfare, Revenue, and Externality Analysis}
\label{subsec:welfare-analysis}

We next study how the metrics in \cref{subsec:metrics} vary with the design parameters.
To assess the impact of spam, we compare two worlds evaluated at the same parameter values: (1) the realized world with spam, and (2) a counterfactual world without spam.
Holding parameters fixed isolates the effect of spam on users, validators, and the cost of the blockchain to process blocks.
Our analysis here, especially on welfare and externality, provides the quantities needed later for choosing blockchain parameters wisely.

Figure~\ref{fig:welfare-levels} summarizes the equilibrium levels of user welfare, validator revenue, and externality in the spam world and in the spam-free counterfactual world.
For compactness, the maths expressions of the user welfare, validator revenue, and externality, and the explicit counterfactual deltas and the characterization of where the welfare gap is largest are deferred to \cref{app:random-ordering-metric}.

\begin{figure*}[h]
    \centering
    \includegraphics[width= \linewidth]{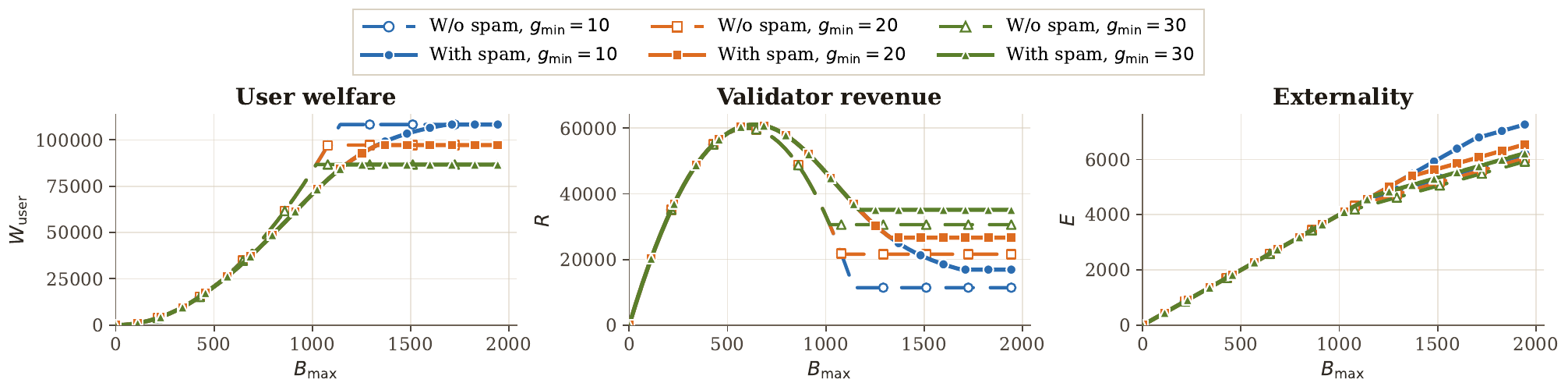}
    \caption{Levels of user welfare, validator revenue, and externality, with and without spam, as functions of block size. Each panel compares the spam world to the spam-free counterfactual at the same $(\bsmax,\gpmin)$.
    }
    \label{fig:welfare-levels}
\end{figure*}

\subsection{Parameter-Setting Guidance}
\label{subsec:parameter-guidance}

The analysis above gives the system designer two main parameters against spam, namely the block size $\bsmax$ and the price floor $\gpmin$.
These two parameters affect user inclusion, validator revenue, and the system cost of supporting larger blocks.
A good choice should therefore increase user welfare, but should not keep increasing $\bsmax$ once the extra capacity mostly creates spam or idle space. Figures~\ref{fig:welfare-levels} and \ref{fig:welfare-deltas} depict the tradeoff between welfare on one side, and revenue and externality on the other side.
We address each parameter in turn: here we first discuss choosing $\bsmax$ for a fixed $\gpmin$, and we discuss choosing $\gpmin$ for a fixed $\bsmax$ in \cref{app:param-gmin}.

We present two rules for choosing $\bsmax$ given a fixed gas price floor $\gpmin$: a simple baseline that scales to the point where user welfare plateaus, and a refined rule that stops earlier by requiring that marginal capacity primarily serves users.

\parhead{A baseline rule: stop scaling at the plateau $B_{\mathrm{plat}}$.}
Choose a gas price floor $\gpmin$.
For this fixed floor, user welfare increases with $\bsmax$ until the system reaches the slack threshold $B_{\mathrm{plat}}(\gpmin)$.
At this point, the block has enough space to fit both the non-spam demand that remains at the floor and the spam that is still profitable at that floor.
Once $\bsmax \ge B_{\mathrm{plat}}(\gpmin)$, further increase in block size does not increase user welfare, because the user side is already capped by $D(\gpmin)$.
The revenue also does not increase, and only the externality increases.
Therefore, if the designer fixes $\gpmin$ and wants to maximize user welfare without creating unnecessary externality, the natural choice is
$
\bsmax^\dagger(\gpmin)=B_{\mathrm{plat}}(\gpmin).
$
This gives a relatively simple benchmark rule: for a fixed gas price floor $\gpmin$, increase block size until the user welfare plateaus.\footnote{In practice, the designer may want to keep some slack in blockspace, to accommodate bursts in user demand, i.e., demand function steepening in our model. Some blockchains do this by choosing a block capacity larger than the demand at $\gpmin$, so that demand spikes during volatile periods can be absorbed without congestion.}

\parhead{A refined rule: require that marginal capacity does not admit disproportionate spam.}
The benchmark rule $\bsmax^\dagger(\gpmin)=B_{\mathrm{plat}}(\gpmin)$ is useful, but it can be too permissive: near $B_{\mathrm{plat}}$, much of the newly provisioned capacity may be absorbed by spam, as shown in \cref{fig:marginal-capacity}.
Therefore, in practice, the designer may want a stricter rule: if the block size is increased by a small amount, a sufficient fraction of that extra capacity should go to useful user gas rather than to spam.

To make this precise, we define the marginal user share as
$
m_{\text{user}}
:=
\frac{\partial Q_u^\ast(\bsmax,\gpmin)}{\partial \bsmax}
$,
which measures the fraction of an increase in block capacity that becomes useful user gas.
Inside the entry-and-congested region, this derivative can be written as
$$
m_{\text{user}}
=
\frac{1}{2}
\left(
1-\frac{\bsmax-D_0+\spc+\beta r_0/D_0}
{\sqrt{(\bsmax-D_0+\spc+\beta r_0 /D_0)^2+4\beta r_0 }}
\right).
$$
This expression is strictly decreasing in $\bsmax$ as shown in \cref{prop:mmus-general}; equivalently, spam takes an increasing share of marginal capacity.
Once $\bsmax \ge \bs_{\mathrm{plat}}$, $m_{\text{user}}$ becomes zero, because extra capacity is no longer used by either users or spam.

We now introduce a design parameter $\eta\in(0,1]$, which we call the {\em minimum marginal user share} (MMUS).
It requires that at least an $\eta$ fraction of the next unit of capacity go to users.
Given a fixed $\gpmin$, the refined rule chooses
\[
\bsmax^\ast(\gpmin,\eta)
\in
\arg\max_{\bsmax}\;W_{\text{user}}(\bsmax,\gpmin)
\;\text{subject to}\;
m_{\text{user}}\ge \eta.
\]
Because $W_{\text{user}}(\bsmax,\gpmin)$ is nondecreasing in $\bsmax$, this rule picks the largest block size for which marginal capacity still primarily benefits users.
It is a stricter version of the plateau rule: $\bsmax^\ast(\gpmin,\eta)\le B_{\mathrm{plat}}(\gpmin)$ whenever $\eta>0$.
Lower $\eta$ allows more aggressive scaling, while higher $\eta$ stops scaling earlier.

\begin{proposition}
\label{prop:mmus-general}
Fix $\gpmin$ and suppose the equilibrium is in the entry-and-congested region.
Let $\gp^\ast$ denote the gas price at equilibrium.
Then, for any twice continuously differentiable and strictly decreasing demand curve $D(\gp)$, if $\gp^\ast D(\gp^\ast) D''(\gp^\ast)+2D(\gp^\ast)D'(\gp^\ast)-2\gp^\ast(D'(\gp^\ast))^2<0$ ($D'$ is $\frac{\mathrm{d} D}{\mathrm{d}g}$), (which holds for the linear demand function), then $m_{\text{user}}$ is decreasing in $\bsmax$.
\end{proposition}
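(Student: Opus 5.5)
The plan is to reduce the equilibrium in the entry-and-congested region to a single equation linking $\bsmax$ to the clearing price $\gp^\ast$. Then $m_{\text{user}}$ becomes an explicit function of $\gp^\ast$ alone, and its monotonicity in $\bsmax$ follows from the chain rule. Throughout, write $k:=r_0/D_0$. In this region three conditions hold:
\begin{itemize}
\item the block is full, $Q_u+\spc\spm=\bsmax$;
\item users are on their demand curve, $Q_u=D(\gp^\ast)$;
\item spammers make zero profit, $\frac{r}{\spm+1}=\spc\gp^\ast$ with $r=kQ_u$, so that $\spc(\spm+1)=kD(\gp^\ast)/\gp^\ast$.
\end{itemize}
Eliminating $\spm$ gives
\[
\bsmax+\spc=\Phi(\gp^\ast),\qquad \Phi(\gp):=D(\gp)\Bigl(1+\frac{k}{\gp}\Bigr).
\]
This relation holds for a general demand curve, not only the linear one.

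Next I would check that $\gp^\ast$ is a well-defined, differentiable, decreasing function of $\bsmax$. Differentiating,
\[
\Phi'(\gp)=D'(\gp)\Bigl(1+\frac{k}{\gp}\Bigr)-\frac{kD(\gp)}{\gp^2}.
\]
Both terms are negative, since $D'<0$ and $D>0$, $\gp>0$, $k>0$ in the region. So $\Phi$ is strictly decreasing and $C^1$ (indeed $C^2$, because $D$ is). The inverse/implicit function theorem then gives $\gp^\ast(\bsmax)$ with $\mathrm{d}\gp^\ast/\mathrm{d}\bsmax=1/\Phi'(\gp^\ast)<0$. Since $Q_u^\ast=D(\gp^\ast)$, we get
\[
m_{\text{user}}=\frac{D'(\gp^\ast)}{\Phi'(\gp^\ast)}=\frac{1}{1+k\,h(\gp^\ast)},\qquad h(\gp):=\frac1{\gp}-\frac{D(\gp)}{\gp^2D'(\gp)}.
\]
Here $h>0$, so $m_{\text{user}}\in(0,1)$, consistent with the linear formula stated earlier, which I would use as a sanity check.

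Because $\gp^\ast$ is decreasing in $\bsmax$, $m_{\text{user}}$ is decreasing in $\bsmax$ exactly when $m_{\text{user}}$ is increasing in $\gp$. Since $k>0$, that is exactly when $h$ is decreasing in $\gp$. The main computation, which is the only real obstacle (bookkeeping of signs), is $h'$. By the quotient rule,
\[
\frac{\mathrm d}{\mathrm d\gp}\frac{D}{\gp^2D'}=\frac{\gp^2D'^2-2\gp DD'-\gp^2DD''}{\gp^4D'^2}.
\]
Combining this with $-1/\gp^2$, I expect
\[
h'(\gp)=\frac{\gp DD''+2DD'-2\gp D'^2}{\gp^3D'^2}.
\]
Hence $h'<0$ precisely under the hypothesis $\gp^\ast D D''+2DD'-2\gp^\ast D'^2<0$, and the conclusion follows.

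Finally, for the linear case $D''=0$, and the remaining terms $2DD'<0$ and $-2\gp D'^2<0$ are both negative. So the condition holds automatically. This also recovers the strict decrease of the closed-form $m_{\text{user}}$ displayed before the proposition.
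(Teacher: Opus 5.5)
Your proof is correct and follows essentially the paper's route: eliminate $\spm^\ast$ to get the single equilibrium relation $\bsmax+\spc=D(\gp^\ast)(1+r_0/(D_0\gp^\ast))$, differentiate it implicitly to obtain $\partial\gp^\ast/\partial\bsmax<0$ and $m_{\text{user}}$ as a function of $\gp^\ast$, then apply the chain rule. Your rewriting $m_{\text{user}}=1/(1+k\,h(\gp^\ast))$ only streamlines the paper's direct quotient-rule differentiation of $m_{\text{user}}$ in $\gp^\ast$, and it yields the same sign condition $\gp^\ast DD''+2DD'-2\gp^\ast D'^2<0$.
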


The proof can be found in \cref{app:proofs}.
We note that the condition in \cref{prop:mmus-general} holds for the linear demand function, and also for the exponential demand function (i.e., $D(\gp) = D_0 \cdot e^{-\lambda \gp})$.

Figures~\ref{fig:marginal-capacity} visualize this rule.
It plots $\bsmax^\ast$ as a function of $\eta$ when $\gpmin$ is fixed, and shows the corresponding $\bsmax^\ast$ values for representative $\eta$ values.

\parhead{Takeaway (Marginal spam share)} Spam takes an increasing share of each additional unit of block capacity. Setting $\bsmax$ below $B_{\mathrm{plat}}$ or raising $\gpmin$ creates a favorable trade-off: forgoing a small amount of user welfare eliminates a disproportionate amount of spam.

\begin{figure}
    \centering
    \includegraphics[width=0.7\linewidth]{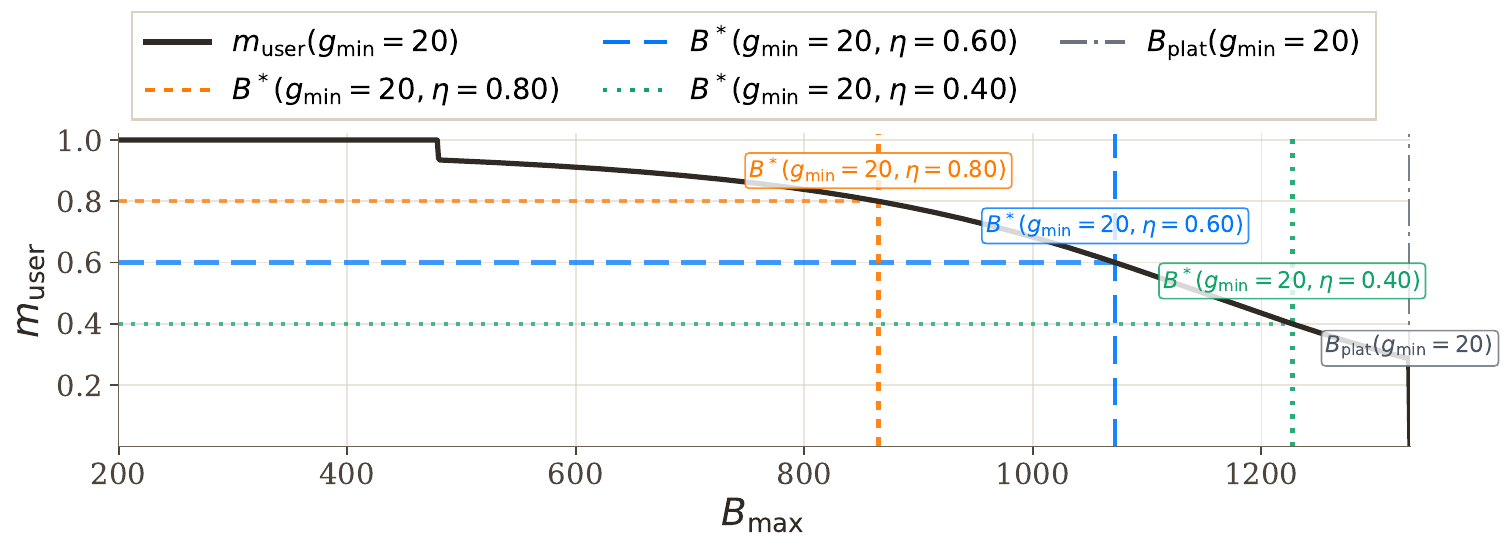}
    \caption{Marginal allocation of additional block capacity. The x-axis represents the maximum block size, and the y-axis represents the marginal user share. Here we fix $\gpmin =20$.}
    \label{fig:marginal-capacity}
\end{figure}

\section{Analysis with (Approximate) Priority Fee Ordering}
\label{sec:priority-fee-ordering}

We now turn to a TFM with priority fee ordering (PFO), where transactions are executed in decreasing order of their bids rather than randomly.
PFO closely approximates the ordering observed on many deployed chains, where transactions paying higher priority fees consistently land earlier in the block.
It may also help reduce spam, as under random ordering, spam transactions can occupy early positions in the block without paying more than others, whereas PFO makes this costly.

To capture this effect, we use a block-position-specific demand model.
The idea is that users differ not only in their valuation, but also in which part of the block they are willing to occupy.
In practice, the top of the block generally serves high-value, time-sensitive flow, e.g., professional DeFi traders, while later positions are filled by users who only need inclusion.
We model this by partitioning the block into ordered regions and associating each region with its own demand curve.
High-value users only demand early positions, since by the time later positions execute their opportunity has already passed.
Lower-value users are represented by demand for later positions, which clear at lower gas prices.

In the main body, we focus on the two sub-block case.
We split the block into an early sub-block and a late sub-block, with the first executing before the second.
Transactions sort into sub-blocks by bid: higher bids land in the first sub-block, lower bids in the second.
Within each sub-block, ordering is random.
We defer the general $n$-sub-block formulation, where sub-blocks have equal capacity and the demand curve is evenly partitioned across them, to \cref{app:pfo-equilibrium}.

\parhead{Model}
We use a single parameter $v\in[0,1]$ to specify the two sub-block approximation.
The first sub-block receives fraction $v$ of block capacity and is associated with the upper $v$ fraction of the original valuation distribution; the second sub-block receives fraction $(1-v)$ of block capacity and is associated with the remaining lower $(1-v)$ fraction.
Thus, the sub-block capacities are $C_1=v\bsmax$ and $C_2=(1-v)\bsmax$.
{Formally, the demand curves are $D_1^{(v)}(\gp)=(\min\{vD_0,D_0-\beta\gp\})^+$ and $D_2^{(v)}(\gp)=\bigl((1-v)D_0-\beta\gp\bigr)^+$.}
Demand is sub-block specific: users from the first slice are willing to occupy only the first sub-block, and users from the second slice are willing to occupy only the second.
Thus, unmet demand in one sub-block does not spill over into the other, so each sub-block has its own user demand, clearing price, and spam-entry condition.

Let $\gp_1$ denote the clearing price of the first sub-block and let $\gpu$ denote the clearing price of the second sub-block.
Whenever both sub-blocks are nonempty, we have $\gp_1\ge \gpu\ge \gpmin$.
Spam bots may target either sub-block and pay the corresponding price.
If sub-block $i$ contains $Q_i$ user gas, the aggregate value of opportunities created in that sub-block is $\frac{r_0}{D_0}Q_i$.

\subsection{Competitive Spam Equilibrium}
\label{subsec:pfo-equilibrium}

We now characterize the spam equilibrium in the two-sub-block case.
Let $S_1,S_2$ denote the number of spam transactions placed in the first and second sub-blocks, respectively.

Given $S_1$ spam transactions in the first sub-block, the included user gas there is
$
Q_1(S_1;\gpu)
=
\min\{
C_1-S_1\spc,
D_1^{(v)}(\gpu)
\}^+.
$
The corresponding price is
$
\gp_1(S_1;\gpu)
=
\max\{
\gpu,
P_1^{(v)}(Q_1(S_1;\gpu))
\}
$ for the first sub-block.
Intuitively, if the top-sub-block demand at the lower price $\gpu$ is large enough to fill the post-spam capacity $C_1-S_1\spc$, then the first-sub-block price rises above $\gpu$ to clear that capacity.
Otherwise, the first sub-block is slack and clears at $\gpu$.
Similarly, given $S_2$ spam transactions in the second sub-block, the included user gas there is
$
Q_2(S_2;\gpu)
=
\min\{
C_2-S_2\spc,
D_2^{(v)}(\gpu)
\}^+.
$
The second sub-block clears at the block-level inclusion price, so the equilibrium value $\gpu^\ast$ satisfies the fixed-point condition
$
\gpu^\ast
=
\max\{
\gpmin,
P_2^{(v)}(Q_2(S_2^\ast;\gpu^\ast))
\}
$.
If $C_2=0$, then $\gpu^\ast=\gpmin$.

We write $U_i(S_i)$ for the aggregate expected utility of spam transactions in sub-block $i$.
By symmetry among spam transactions in the same sub-block, setting this aggregate utility to zero is equivalent to setting the expected utility of each spam transaction in that sub-block to zero when $S_i>0$.
For the first sub-block, each opportunity created by user gas in that sub-block is captured by spam with probability $S_1/(S_1+1)$, by the same relative-ordering argument as in \cref{subsec:equilibrium-random}.
By linearity of expectation over all opportunities created in the first sub-block, the expected utility of spam in the first sub-block is
$
U_1(S_1;\gpu)
=
\frac{r_0}{D_0}Q_1(S_1;\gpu)\frac{S_1}{S_1+1}
-
S_1\spc\,\gp_1(S_1;\gpu).
$
The equilibrium spam volume in the first sub-block is the solution $S_1^\ast$ to
$
U_1(S_1^\ast;\gpu^\ast)=0,
$
subject to $S_1^\ast\spc\le C_1$.

For the second sub-block, spam revenue has two components.
First, opportunities may be created by user gas in the second sub-block and captured there, contributing
$
\frac{r_0}{D_0}Q_2(S_2;\gpu)\frac{S_2}{S_2+1}.
$
Second, opportunities may be created in the first sub-block but fail to be captured there.
The aggregate value of such surviving first-sub-block opportunities is
$
\frac{r_0}{D_0}\frac{Q_1^\ast}{S_1^\ast+1}.
$
These opportunities then reach the second sub-block and are captured there by the first spam transaction in that sub-block, whenever $S_2>0$.
Thus, the expected utility of spam in the second sub-block is
$
U_2(S_2;\gpu)
=
\frac{r_0}{D_0}
(
Q_2(S_2;\gpu)\frac{S_2}{S_2+1}
+
\frac{Q_1^\ast}{S_1^\ast+1}
)
-
S_2\spc\,\gpu.
$
The equilibrium spam volume $S_2^\ast$ is derived by solving
$
U_2(S_2^\ast;\gpu^\ast)=0,
$
subject to the capacity constraint $S_2^\ast\spc\le C_2$.

Together, the two zero-profit conditions and the fixed-point condition for $\gpu^\ast$ determine the equilibrium tuple $(S_1^\ast,S_2^\ast,\gpu^\ast)$.
The total spam volume is accordingly
$
S^\ast=S_1^\ast+S_2^\ast.
$
The total included user gas is
$
Q_u^\ast=Q_1^\ast+Q_2^\ast.
$

Figure~\ref{fig:pfo-spam-volume} illustrates how the resulting spam volume changes with $\bsmax$ and with the capacity-split parameter $v$.
For large $\bsmax$, once spam reaches its plateau, the $v=0.5$ curve has less spam than the random-ordering case ($v = 1$).
The reason is that spam that is included in the first sub-block pays the higher price $\gp_1$, while the cheaper lower region has only half of the block capacity.
For smaller $\bsmax$, however, $v=0.5$ can generate more spam than $v=1$, since splitting the block creates a lower-priced second sub-block even when the overall block is still relatively scarce.
Spam can enter this cheaper tail region, pay only $\gpu$, and still capture opportunities that survive from the first sub-block.
We note that this is an artifact of our model.

\parhead{Takeaway (Priority fee ordering)}
Priority fee ordering reduces spam when enough scarce capacity is allocated to a high-value early block region.
When the early region is small or has little demand, the lower sub-block can remain cheap, giving spammers an attractive tail region.

\begin{figure}[h]
    \centering
    \includegraphics[width=0.7\linewidth]{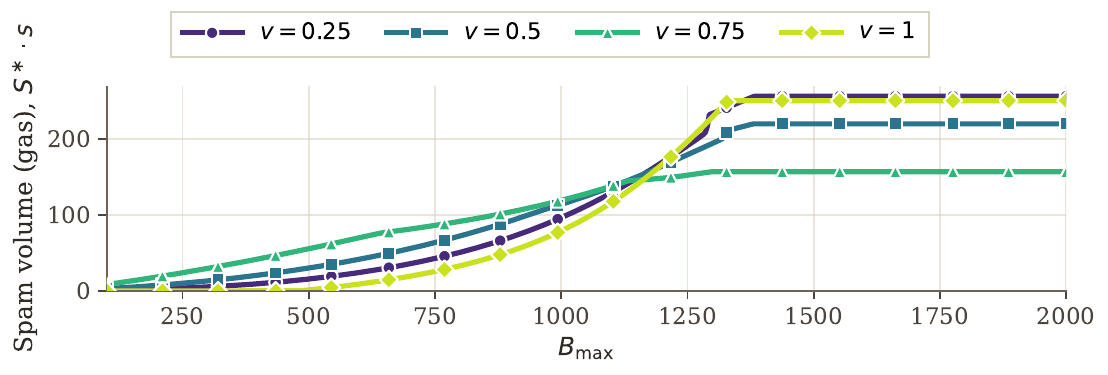}
    \caption{Equilibrium spam volume under the two-sub-block approximation to PFO for $v\in\{0.25,0.5,0.75,1\}$.
    We use the same calibration as in \cref{fig:spam-volume}.
    The case $v=1$ collapses to the random-ordering benchmark.}
    \label{fig:pfo-spam-volume}
\end{figure}

\subsection{Metrics under Approximate PFO}
\label{subsec:pfo-welfare}

The discussion above focuses on spam volume and where spam is placed inside the block.
For completeness, we also evaluate user welfare, validator revenue, and network externality under the two-sub-block PFO model.
In the interest of space, we defer the formal expressions and numerical plots to \cref{app:pfo-metrics}.

The main qualitative takeaway is that for any fixed $v$, the pattern in the metrics mirrors the random-ordering model.
As $\bsmax$ rises, user welfare rises and eventually plateaus, validator revenue first increases but then decreases and finally plateaus, and network externality increases.
We avoid reading the separate $W_{\text{user}}$, $R$, and $E$ curves as direct comparisons across different values of $v$, since different $v$ changes the distribution of user valuations across sub-blocks.
The combined quantity $W_{\mathrm{user}}+R$ changes less as $v$ differs.

\section{Impact of Spam on Blockchain Scaling}
\label{sec:demand-scaling}
One of the central concerns with spam (e.g., the one of Flashbots~\cite{miller-mev-scaling-2025}) is that it may \emph{limit scaling}, as additional block capacity could simply be consumed by spammers instead of benefiting users.
Namely, the concern is that the chain may provision more block space, but the effective throughput available to genuine users may not increase proportionally if spam absorbs the added capacity.
Recall that spam share (\cref{subsec:metrics}) is a metric measuring the fraction of included gas consumed by spam MEV.
We now study whether and to what extent spam may limit the benefit of scaling to be passed down to a rise in user demand.

According to empirical data (see \cref{app:empirical-opportunity}), on both Base and Arbitrum, the total size of MEV profit grows approximately linearly in non-MEV gas.
In our model, this corresponds to the aggregate value of all non-interacting opportunities created by genuine user activity.
As shown in \cref{subsec:equilibrium-random}, under random ordering the expected revenue of spam depends on the opportunity set only through this aggregate value, by linearity of expectation.
Accordingly, in this section, we assume that the aggregate opportunity value scales linearly with included user gas.

To formalize the problem further, let $\lambda\ge 1$ denote a demand-scaling parameter.
We scale non-spam demand as
$D_\lambda(\gp)=\lambda(D_0-\beta \gp)$,
so that the number of users at every price level grows proportionally with $\lambda$.
Let $r_0$ de
The realized aggregate opportunity value is assumed to be
$
\bar r_\lambda
:=
r_0\cdot \frac{Q_u}{D_0},
$
where $Q_u$ is the equilibrium amount of included user gas under the scaled demand curve.
Thus, as more user gas is included, the aggregate value of MEV opportunities grows proportionally.
Additionally, let $\rho_{\mathrm{spam}}$ denote the {\em spam share} of included gas, i.e., spam gas divided by total included gas.
A higher value of $\rho_{\mathrm{spam}}$ means that a larger fraction of the scaled block is unavailable to genuine users.

Here, we focus on the random-ordering benchmark and the two block-size rules from \cref{subsec:parameter-guidance}; the demand-scaling analysis under approximate PFO is deferred to \cref{app:demand-scaling-pfo}.
Under the baseline parameter choice, the designer sets $\bsmax$ equal to the plateau threshold under the scaled demand curve.
Since all users willing to pay the floor are then included, the aggregate opportunity value at the plateau is
$
\bar r_\lambda^\dagger=r_0\cdot D_\lambda(\gpmin)/D_0,
$
and the corresponding plateau block size is
$
\bs_{\mathrm{plat},\lambda}
=
D_\lambda(\gpmin)
+
\left(
\frac{r_0 D_\lambda(\gpmin)}{D_0\gpmin}
-
\spc
\right)_+.
$
Let $\spm_\lambda^\dagger$ denote the equilibrium spam volume at $\bsmax=\bs_{\mathrm{plat},\lambda}$.
The spam share of included gas under the baseline rule is then
$
\rho_{\mathrm{spam}}^\dagger(\lambda)
=
\frac{\spc\,\spm_\lambda^\dagger}{\bs_{\mathrm{plat},\lambda}}.
$

We also study how the refined block-size rule in \cref{subsec:parameter-guidance} behaves under the same scaling model.
Fix a floor $\gpmin$ and a target $\eta\in(0,1]$ for the minimum marginal user share.
Let $\bsmax^\ast(\gpmin,\eta)$ denote the resulting MMUS block size under the scaled demand curve, and let $\spm_\lambda^\ast$ be the corresponding equilibrium spam volume.
The spam share of the included gas is then
$
\rho_{\mathrm{spam}}^\ast(\lambda)
=
\frac{\spc\,\spm_\lambda^\ast}{\bsmax^\ast(\gpmin,\eta)},
$
and the total included gas equals $\bsmax^\ast(\gpmin,\eta)$.

Figure~\ref{fig:demand-scaling} plots $\rho_{\mathrm{spam}}^\dagger(\lambda)$ and $\rho_{\mathrm{spam}}^\ast(\lambda)$ for $\lambda\in[1,50]$, with $\eta=0.6$ and a fixed price floor $\gpmin=20$.
The main pattern is that, under linear aggregate-opportunity scaling, the spam share rises at first and then plateaus at a positive level.
The intuition is simple: both user demand at the floor and the aggregate value of induced MEV opportunities grow linearly with $\lambda$, so the amount of profitable spam also scales linearly.
As a result, scaling does not drive the spam share to zero.
Instead, the system approaches a regime in which spam remains a stable fraction of included gas.

We also observe that, for the same $\lambda$, the refined block-size choice $\bsmax^\ast$ yields a lower spam share than the baseline choice $\bs_{\mathrm{plat},\lambda}$.
The reason is that the MMUS rule stops scaling earlier, namely at the point where the marginal benefit of extra capacity becomes too spam-heavy.
Therefore, the refined rule keeps the spam share systematically below the $\bs_{\mathrm{plat}}$ benchmark.

\begin{figure}[h]
    \centering
    \includegraphics[width=0.7\linewidth]{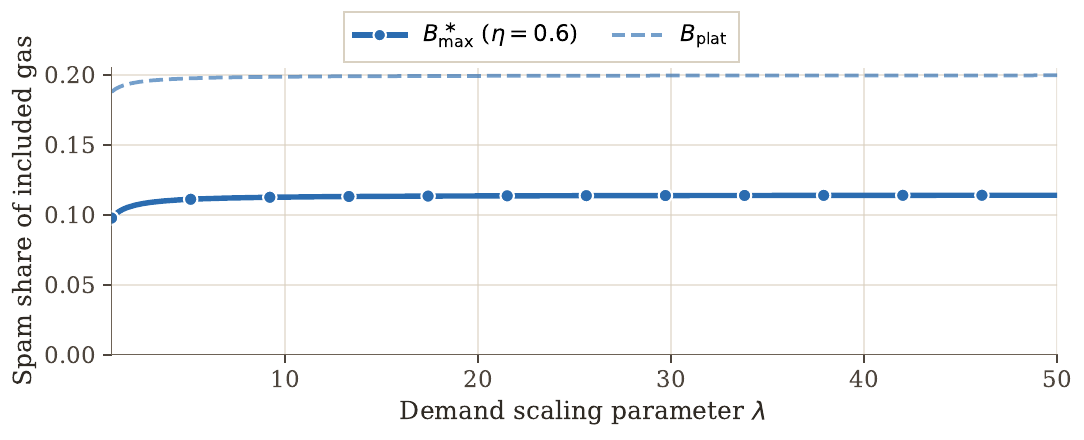}
    \caption{Impact of demand scaling on the spam share of included gas under random ordering.
    The x-axis and y-axis are the demand-scaling parameter $\lambda$ and the fraction of included gas consumed by spam, respectively.
    The dashed curve shows the outcome under the baseline rule that scales block size to $\bs_{\mathrm{plat},\lambda}$.
    The solid curve shows the outcome under the refined MMUS rule.
    We fix $\gpmin=20$ and $\eta=0.6$.}
    \label{fig:demand-scaling}
\end{figure}

\parhead{Takeaway (Spam under demand scaling)}
When aggregate MEV opportunity value scales linearly with included user gas, spam's share of included gas does not vanish as demand grows.
Instead, under both the plateau rule and the refined MMUS rule, the spam share approaches a sizable positive level.

\section{Empirical Analysis}\label{sec:empirical}

Our previous analysis predicts that spam absorbs a disproportionate share of marginal block capacity ($\bsmax$) and that the gas price floor ($\gpmin$) directly gates spam profitability.
We now examine how these two design levers affect spam in practice by analyzing daily spam gas on two major Ethereum Layer~2 rollups, Base and Arbitrum, which have undergone significant changes to both parameters during our observation period.
Following prior empirical work, we identify candidate spam as contracts that repeatedly probe MEV opportunities but rarely execute, i.e., they consume gas without emitting any ERC-20 transfer~\cite{miller-mev-scaling-2025,solmaz2025optimistic}. We capture both the cyclic arbitrage and liquidation probing strategies, anchored respectively on internal calls to DEX contracts and staticcalls to Chainlink price feeds.
The detailed data collection methodology can be found in \cref{app:data}.

\subsection{Spam on Base and Arbitrum}
\label{subsec:empirical-overview}

\Cref{fig:base-spam-full} shows spam's share of total gas on Base alongside the gas target and minimum gas price. Three phases are visible. First, after the Dencun upgrade in Mar~2024, Base progressively raised its gas target and spam's share rose sharply, peaking near 50\% in early 2025 (cf.\ \Cref{fig:base-spam-growth} for an indexed view of absolute gas). Second, when Base reduced the gas target from 70M to 50M in Jun~2025, spam's share declined (cf.\ \Cref{fig:base-target-decrease}). Third, starting in Dec~2025, Base introduced and progressively raised a protocol minimum gas price; spam's share fell sharply and remained below 15\% through Jun~2026 (cf.\ \Cref{fig:base-spam-decline}).

\Cref{fig:arb-spam-full} shows the corresponding spam share for Arbitrum. Both rollups charge an L2 execution fee plus an L1 data posting surcharge; before Dencun, the L1 component dominated, but blobs made it negligible. The gas price floors discussed here differ in scope: Arbitrum's floor of 0.01~gwei (set at Dencun, down from an effective minimum of approximately 0.1~gwei when L1 costs dominated) covers the total fee, i.e., both execution and data posting, whereas Base's floor (introduced in Dec~2025) applies only to the L2 execution fee. With its higher floor, Arbitrum experienced lower spam than Base throughout the post-Dencun period (mostly 5--25\%). In Jan~2026, Arbitrum doubled the L2 gas price floor to 0.02~gwei as part of a broader fee mechanism overhaul~\cite{arb-fee-tweet-2025, arbitrum-dia-2026}. As noted in \Cref{sec:ecosystem-response}, spam's share dipped after the change; a brief rebound in mid-February proved transient (cf.\ \Cref{tab:arb-fee-floor}), and spam subsequently settled below pre-change levels.

\begin{figure*}[t]
    \centering
    \begin{subfigure}[t]{0.49\linewidth}
        \centering
        \includegraphics[width=0.8\linewidth]{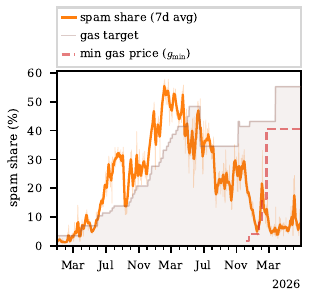}
        \caption{Spam's share of total gas on Base. The background shows the gas target and the stepwise protocol minimum gas price. Spam share peaks near 50\% in early 2025, then falls sharply after the gas price floor is introduced in Dec~2025.}
        \label{fig:base-spam-full}
    \end{subfigure}
    \hfill
    \begin{subfigure}[t]{0.49\linewidth}
        \centering
        \includegraphics[width=0.8\linewidth]{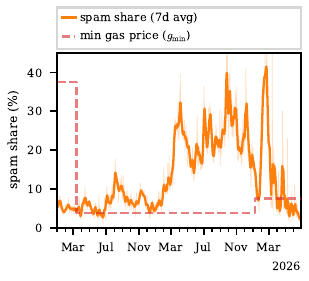}
        \caption{Spam's share of total gas on Arbitrum. The background shows the protocol minimum gas price, doubled from 0.01 to 0.02~gwei on Jan~9, 2026. The mid-February rebound is a transient artifact of three short-lived contracts (cf.\ \Cref{tab:arb-fee-floor}).}
        \label{fig:arb-spam-full}
    \end{subfigure}
    \caption{Spam's share of total gas on Base and Arbitrum (7-day moving average, Jan~2024 to Jun~2026).}
    \label{fig:spam-full}
\end{figure*}

\begin{remark}[Cross-chain comparisons and ordering.]
A raw comparison between Base and Arbitrum should not be read as a test of priority fee ordering.
Our approximate PFO analysis is comparative-static: it changes the ordering rule while holding fixed block capacity, gas price floors, user demand, MEV opportunities, and the distribution of demand across block positions.
Base and Arbitrum differ along all of these dimensions, so differences in their spam shares do not isolate the effect of ordering alone.
We therefore use the empirical evidence mainly in two ways: within-chain changes identify how spam responds to capacity and fee floors, while Base's within-block spam location provides qualitative evidence consistent with the PFO mechanism.
\end{remark}

\parhead{Spam location within blocks on Base.}
We also study where spam transactions appear within Base blocks from Jan~2024 to Jun~2025, before the adoption of flash blocks in Jul~2025.
We collect all transactions sent to the identified spam contracts in each month, rank them by transaction index within the block after excluding the system transaction, and for each $k\in\{0,1,\dots,100\}$ compute the cumulative share of spam gas contained in the first $k\%$ of block positions.
Then we aggregate the monthly curves using spam gas as weights and report the interquartile range (IQR).

\Cref{fig:base-spam-block-position} shows that the cumulative spam gas curve lies below the uniform benchmark for most of the block.
Spam is underrepresented in early block positions and concentrated in later positions.
This qualitative pattern is consistent with the analysis in \cref{sec:priority-fee-ordering}, as spam tends to appear in later, cheaper parts of the block rather than being spread uniformly.

\begin{figure}[h]
    \centering
    \includegraphics[width=0.7\linewidth]{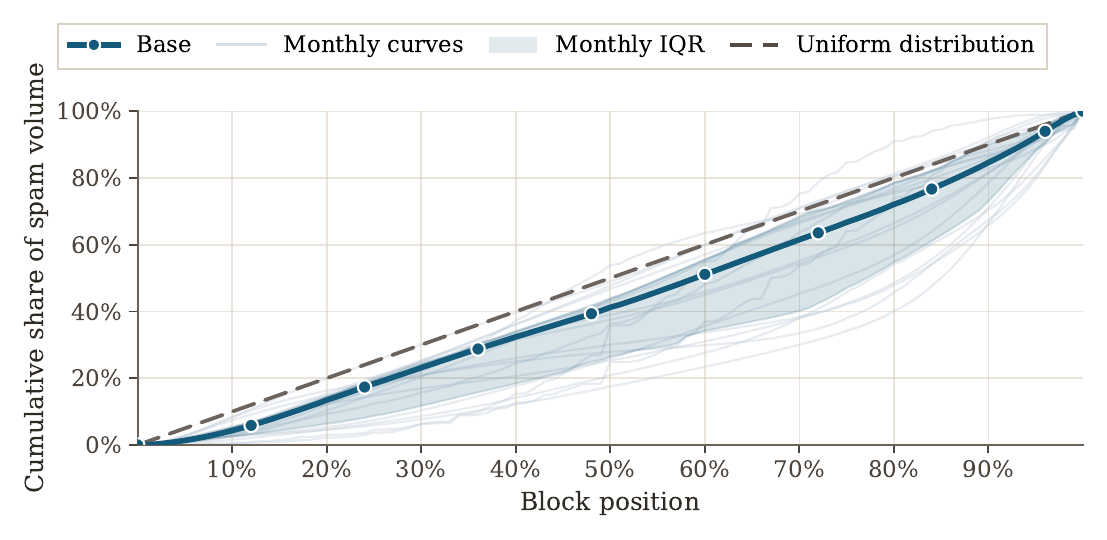}
    \caption{Distribution of spam gas by normalized position within Base blocks, Jan~2024 to Jun~2025
    The x-axis is block position and the y-axis is the cumulative share of spam gas contained at or below that position.
    The solid line is the gas-weighted aggregate across months, and the shaded band shows the monthly interquartile range (IQR), i.e., the band between the 25th and 75th percentiles of the monthly curves.
    The dashed 45-degree line is the uniform benchmark.
    A curve below the dashed line indicates that spam is concentrated toward later positions in the block.}
    \label{fig:base-spam-block-position}
\end{figure}

\subsection{Spam and Block Capacity}
\label{subsec:empirical-capacity}

\setlength{\intextsep}{4pt}
\begin{wraptable}{r}{0.6\linewidth}
    \centering
    \footnotesize
    \setlength{\tabcolsep}{4pt}
    \resizebox{\linewidth}{!}{%
    \begin{tabular}{lcc}
        \toprule
        & \textbf{Model 1:} $\log(\text{spam gas})$ & \textbf{Model 2:} spam share \\
        \midrule
        $\log(\text{gas target})$ & $1.63^{***}$ & $0.07^{***}$ \\
                                   & $(0.130)$ & $(0.012)$ \\[4pt]
        Constant                   & $-2.26$ & $-0.98^{***}$ \\
                                   & $(2.249)$ & $(0.193)$ \\
        \midrule
        $R^2$                      & $0.62$ & $0.15$ \\
        \bottomrule
        \multicolumn{3}{l}{\footnotesize $^{***}$ $p < 0.001$} \\
    \end{tabular}}
    \vspace{2pt}
    \caption{OLS regression results for spam on Base (912 daily observations, Jan~2024 to Jun~2026). Model~1 regresses $\log(\text{spam gas})$ on $\log(\text{gas target})$. Model~2 regresses spam's share of total gas on the same regressor. Newey-West standard errors (6 lags) in parentheses.}
    \label{tab:spam-regression}
\end{wraptable}

We regress spam gas on the gas target, i.e., the empirical counterpart of $\bsmax$, using the full sample of 912 daily observations from Base. We exclude the market-determined clearing fee because it is endogenous: for most of the sample period there was no protocol gas price floor, so the observed minimum fee reflects demand-driven congestion rather than an exogenous policy variable ($\gpmin$ in our model). \Cref{tab:spam-regression} reports the results.

Model~1 shows that spam gas scales super-linearly with the gas target: a 1\% increase in the gas target is associated with a 1.63\% increase in spam gas ($R^2 = 0.62$). Model~2 shows that higher gas targets also increase spam's share of total gas ($+0.07$ per log-unit, $R^2 = 0.15$), consistent with spam disproportionately filling marginal capacity (see \cref{fig:marginal-capacity} and associated analysis in \cref{subsec:parameter-guidance}). %
The gas target reduction on Jun~18, 2025 (70M to 50M) provides a concrete illustration: spam gas fell by 34\% compared to 24\% for non-spam gas in the 30-day window (cf.\ \Cref{fig:base-target-decrease}), i.e., both declined, but spam absorbed a larger share of the reduction.

Since daily observations exhibit serial dependence, we report Newey-West standard errors (6 lags) to avoid understating the true uncertainty. Both coefficients remain highly significant ($p < 0.001$) after this correction. These regressions capture only Base; Arbitrum did not change its gas target during the observation period.

\subsection{Spam and The Minimum Gas Price}
\label{subsec:empirical-fee-floor}

The capacity regressions above deliberately exclude the minimum fee because the observed fee is endogenous for most of the sample. To isolate the effect of the protocol gas price floor ($\gpmin$ in our model), we focus on periods where an exogenous floor was introduced or raised and the gas target was largely constant.

\parhead{Base.} Base introduced a protocol minimum gas price in Dec~2025 and raised it in a series of steps (0.0002, 0.0005, 0.001, 0.002, and 0.005~gwei). Since the steps are unevenly spaced (13 to 36 days apart), fixed-length before/after windows would overlap. We therefore use non-overlapping windows: the ``after'' period runs from one step to the next, and the ``before'' period is a window of equal length immediately preceding the step. Results are reported in~\cref{tab:fee-floor-snapshots}.

\begin{table}[h]
    \centering
    \resizebox{0.8\columnwidth}{!}{%
    \begin{tabular}{lcccc}
        \toprule
        Gas price floor step & Window & Spam gas $\Delta$ & Non-spam gas $\Delta$ & Spam share \\ 
        \midrule
        $0 \to 0.0002$ gwei (Dec~5)  & 13d & $-32.6\%$  & $+6.8\%$  & $28.0\% \to 19.7\%$ \\
        $\to 0.0005$ gwei (Dec~18)   & 36d & $-57.2\%$  & $+21.3\%$ & $24.1\% \to 10.1\%$ \\
        $\to 0.001$ gwei (Jan~23)    & 11d & $+101.4\%$ & $-6.4\%$  & $5.9\% \to 11.9\%$ \\
        $\to 0.002$ gwei (Feb~3)     & 17d & $+49.5\%$  & $-6.5\%$  & $10.0\% \to 15.0\%$ \\
        $\to 0.005$ gwei (Feb~20)    & 9d  & $-19.4\%$  & $-2.7\%$  & $12.1\% \to 10.3\%$ \\
        \bottomrule
    \end{tabular}}
    \vspace{2pt}
    \caption{Non-overlapping before-and-after comparisons around each gas price floor step on Base. The ``after'' window runs from the step to the next step (or end of sample); the ``before'' window is an equal-length period immediately preceding the step. $\Delta$ is the percentage change in daily averages.}
    \label{tab:fee-floor-snapshots}
\end{table}

Over the full gas price floor period, spam's share of total gas fell from 28\% to around 10\%, though the decline was not monotonic across individual steps. The introduction of the gas price floor and the first increase to 0.0005~gwei coincided with the largest reductions ($-33\%$ and $-57\%$), while non-spam gas grew in both cases ($+7\%$ and $+21\%$).
The 0.001 and 0.002~gwei steps saw spam rebound, possibly due to confounding factors such as increased price volatility creating more arbitrage opportunities during that period. The subsequent increase to 0.005~gwei again coincided with a 19\% spam reduction with negligible non-spam impact ($-3\%$). The data is limited to five steps over a short period, but the overall trend is clear: spam's share declined substantially, and the individual steps where spam fell did so while non-spam gas was unaffected or grew, consistent with the gas price floor raising costs for spam bots while remaining negligible for organic users.

\parhead{Arbitrum.} Arbitrum doubled its gas price floor from 0.01 to 0.02~gwei on Jan~9, 2026. \Cref{tab:arb-fee-floor} compares symmetric 50-day windows before and after the change.

\setlength{\intextsep}{4pt}
\begin{wraptable}{r}{0.6\linewidth}
    \centering
    \small
    \setlength{\tabcolsep}{4pt}
    \resizebox{\linewidth}{!}{%
    \begin{tabular}{lccc}
        \toprule
        & Before (90d) & After (90d) & $\Delta$ \\
        \midrule
        Spam gas (Bgas/day)     & $110.2$ & $80.2$  & $-27.2\%$ \\
        Non-spam gas (Bgas/day) & $363.9$ & $311.5$ & $-14.4\%$ \\
        Spam share              & $23.2\%$ & $20.5\%$ & $-11.9\%$ \\
        \bottomrule
    \end{tabular}}
    \vspace{2pt}
    \caption{90-day before-after comparison around the Arbitrum gas price floor doubling (0.01 to 0.02~gwei) on Jan~9, 2026. Before: Oct~11 to Jan~8; after: Jan~9 to Apr~8.}
    \label{tab:arb-fee-floor}
\end{wraptable}

Over the 90-day windows, doubling the gas price floor reduced spam. Spam gas fell by 27\%, and its share of total gas declined from 23\% to 21\%. Non-spam gas also fell over the same period ($-14\%$). But it fell by less than spam, so spam absorbed a larger share of the overall decline. This is consistent with the gas price floor gating spam more than organic demand. The reduction was not immediate. In the first weeks after the change, spam briefly dipped and then rebounded above pre-change levels (cf.\ \Cref{sec:ecosystem-response}). Taken in isolation, this could suggest the floor had no lasting effect. The rebound, however, was a transient artifact. It is almost entirely attributable to three short-lived contracts that went live between Feb~7 and~8. Together they accounted for 51\% of all Arbitrum spam gas in February, and ceased activity around Feb~25.\footnote{The three contracts are \texttt{0x7c99\ldots 24d9}, \texttt{0xb4a1\ldots bf47}, and \texttt{0xc487\ldots 28ac}.} The three contracts came online and went offline in near-lockstep, suggesting they were operated by a single entity. Their activity does not reflect a sustained response to the fee change. Once this episode passes, spam settles well below pre-change levels. Over the longer window, the floor increase is therefore associated with a clear reduction in spam.

\section{Mitigations to Spam}
\label{sec:mitigations}

In this section, we discuss mitigation directions suggested by our analysis.
We organize mitigations into two broad classes.
The first class consists of {system-level mechanisms}, which operate at the blockchain or block-producer layer.
Within this class, {\em incentive-based mechanisms} make spam more expensive, while {\em cheap-path mechanisms} make failed spam cheaper to process.
The second class consists of {\em application-specific mechanisms}, where specific applications are redesigned to reduce the MEV opportunities for spam.
These approaches can be combined in practice.

\subsection{Incentive-Based Mechanisms}
\label{subsec:incentive}

Incentive-based mechanisms reduce spam by increasing its expected cost, so that fewer spam attempts enter in equilibrium.
They do not require perfectly identifying spam and can be deployed as coarse policies targeting resource consumption.

\parhead{Minimum gas prices.}
A simple mechanism is to set a nontrivial floor gas price (i.e., increasing $\gpmin$).
As discussed in \cref{sec:ecosystem-response}, several chains have adopted nontrivial fee floors, either from launch or in response to spam~\cite{base-fee-increase-2025, arb-fee-tweet-2025, arbitrum-dia-2026, aptos-gas-fee-2025, sui-gas-2025}.
This is consistent with our analysis in \cref{sec:random-ordering}: a sufficiently high floor gas price caps spam volume and preserves useful block space for users. {Empirical evidence for this effect is shown in \cref{subsec:empirical-fee-floor}.}

\parhead{Charging for reserved execution gas.}
Another lever is to charge based on \emph{gas limit} rather than ex post gas used.
This targets strategies that reserve capacity but revert early or do little work when they fail.
In our model, this makes spam pay for the capacity it forces the network to reserve, which reduces equilibrium spam volume; Monad, for instance, adopted this approach~\cite{categorylabs-monad-spec-2025}.

\subsection{Cheap-Path Mechanisms}
\label{subsec:cheap-path}

Cheap-path mechanisms reduce the cost to the system of failed spam transactions.

\parhead{Filtering when execution is not the bottleneck.}
When the block producer can simulate transactions, it can filter out transactions that produce no state changes before including them in a block.
This is aligned with Ethereum's block-building pipeline, where builders and relays already simulate candidate blocks~\cite{flashbots-relay-docs}, and with future proposer-builder separation designs~\cite{eip-7732}.
A practical concern is that block producers themselves can become targets for spam, so rate limits, reputation systems, or out-of-band fees may still be needed~\cite{flashbots-relay-docs}.

\parhead{Cancellation when execution is the bottleneck.}
When execution resources are scarce, one possible approach is to allow a failed spam transaction to be {\em cancelled} without full execution, provided the cancellation is justified in a verifiable way.
For example, a sender could provide evidence that the transaction would lead to no state changes, while paying a lower but nonzero fee.

\subsection{Application-Specific Mechanisms}
\label{subsec:application-specific}

Application-specific mechanisms reduce spam by changing the protocol that generates the MEV opportunity, so that the opportunity size is reduced.
The application can internalize, auction, or redistribute the value that would otherwise be left to spam transactions.
Examples include Chainlink's Smart Value Recapture feeds for oracle-update backrunning~\cite{chainlink-svr-docs-2026}, and protocol redesigns that redistribute MEV at the application layer~\cite{zhang2025rediswap,adams2025amm}.

\section{Related Work}
\label{sec:related}
\parhead{Empirics of spam MEV.}
Spam MEV was first documented on Solana by Umbra Research~\cite{umbra2022solana}, who observed high failure rates among MEV transactions and provided early evidence that high-throughput, low-fee chains incentivize redundant submissions over priority-fee competition.
Solmaz et al.~\cite{solmaz2025optimistic} study optimistic cyclic arbitrage patterns on Ethereum Layer~2 rollups and find that cyclic arbitrage bots consumed over 50\% of gas on Base and Optimism by early 2025, with only 6--12\% of probes resulting in a trade.
Gogol et al.~\cite{gogol2025first} study reverted transactions on five rollups, finding that revert rates rose sharply after Dencun; by restricting to reverts, they capture one subtype of spam implementation and strategy, as not all unsuccessful probes revert.
Wu et al.~\cite{wu2026wait} study the distribution of spam transactions of targeted search, where bots identify opportunities off-chain and submit route-committed transactions, and of probabilistic search, where bots repeatedly probe on-chain and resolve opportunity discovery during execution.
These works are primarily empirical. Our work develops a framework that characterizes how block capacity, gas price floors, and the TFM jointly determine spam volume and analyzes the resulting impact on user welfare, validator revenue, and network externality. We complement the analysis with empirical evidence.

\parhead{Modeling spam MEV.}
Concurrent with our work, Mazorra et al.~\cite{mazorra2026timinggames} model spam as a timing game among a finite number of searchers competing for an opportunity, showing that timing competition fully dissipates expected profits. Their equilibrium spam volume in the large-$n$ limit corresponds exactly to ours, despite the models being derived independently. The two models are complementary: Mazorra et al.\ characterize when and how often each searcher submits in the timing game, while we focus on how protocol design parameters, i.e., block capacity, gas price floors, and transaction ordering, jointly shape spam volume and its welfare and externality implications.

{In related work, Capponi and Zhu~\cite{capponi2025auctioningtime} model costly duplicate submissions as a latency race among traders competing for time-sensitive opportunities on blockchains, framing the problem as an observable analogue of HFT latency investment. They show that a time-priority auction (Timeboost) reduces redundant submissions and reallocates the waste into platform revenue, and validate this prediction on Arbitrum using a difference-in-differences design. Their focus is on a specific auction mechanism as mitigation, whereas we study how general design parameters jointly shape spam volume and welfare.}

\parhead{Targeted MEV.}
Eskandari et al.~\cite{eskandari2019sok} taxonomized front-running attacks on Ethereum, and Daian et al.~\cite{daian2020flash} broadened the scope to MEV more generally, documenting priority gas auctions among searchers. Qin et al.~\cite{qin2021quantifying} and Torres et al.~\cite{ferreira2021frontrunner} subsequently quantified sandwich attacks, arbitrage, and liquidations at scale. A large body of further work has measured specific strategies and infrastructure on the Ethereum Layer~1~\cite{zhou2021high,zhou2021just,heimbach2024nonatomic,wang2022cyclic,weintraub2022flash,piet2022extracting,adams2024mev,zhu2024revert,wahrstatter2023time,heimbach2023pbs}, on alternative Layer~1s~\cite{oz2024fcfs}, and across chains and rollups~\cite{torres2024rolling,gogol2025nonatomic,oz2025pandora,obadia2021crossdomain}. These works study targeted MEV, i.e., strategies where searchers identify a specific opportunity off-chain and submit a transaction to capture it. Our work, in contrast, focuses on spam MEV, where searchers submit speculative transactions whose profitability is resolved only at execution time.

\parhead{MEV prevention.}
Proposed mitigations for targeted MEV include protocol-specific parameter tuning~\cite{heimbach2022eliminating,zhou2021a2mm}, trusted third-party ordering via relays or private mempools~\cite{flashbots,mevboost,protectrpc,cowswap}, fair ordering protocols~\cite{kelkar2020order,kelkar2023themis,kursawe2020wendy,cachin2021quick}, commit--reveal and privacy-preserving schemes~\cite{zhang2022f3b,momeni2023fairblock,breidenbach2018enter}, trusted execution environments~\cite{bentov2019tesseract,stathakopoulou2021adding}, and encrypted mempools via threshold encryption~\cite{choudhuri2024bte,choudhuri2025onetime,agarwal2025weightedbte,agarwal2025efficiently,bormet2025beat,babel2024prof}; see Heimbach and Wattenhofer~\cite{heimbach2022sok} for a comprehensive systematization. Spam MEV, however, is largely unaddressed by these mechanisms and may even be worsened: encrypted mempools, for instance, remove the information advantage that enables targeted extraction, potentially shifting searchers toward speculative, high-volume strategies instead. As part of this work, we also propose mitigations specifically targeting spam MEV.

\section{Conclusions and Future Directions}
We develop a framework for analyzing spam MEV under a competitive equilibrium, deriving equilibrium spam volumes and characterizing the impact on user welfare, validator revenue, and network externality. Empirical evidence validates our predictions.

An important direction for future work is the concrete design of system-level mitigation mechanisms, both making spam less attractive to send and failed probes cheaper to process. How these approaches interact across blockchain architectures remains open.

\bibliography{main}
\bibliographystyle{ACM-Reference-Format}

\appendix

\section{Metrics for Random Ordering Model}
\label{app:random-ordering-metric}

This appendix records the level quantities and counterfactual gaps used in \cref{subsec:welfare-analysis}.
Throughout, we compare two worlds evaluated at the same design parameters $(\bsmax,\gpmin)$: the equilibrium world with spam, and the counterfactual world without spam.
This isolates the effect of spam while holding the block size and gas price floor fixed.

\parhead{Spam world and spam-free world.}
In the spam-free world, all block space is available to genuine users.
The clearing price is
$$
\gp^0(\bsmax,\gpmin)
=
\max\{\gpmin,\frac{D_0-\bsmax}{\beta}\},
$$
and the amount of user gas included in the block is
$$
Q_u^0(\bsmax,\gpmin)
:=
\min\{\bsmax,D(\gpmin)\}.
$$

In the spam world, some block space may be taken up by spam.
Using the equilibrium spam volume $\spm^\ast$ from \cref{subsec:equilibrium-random}, the amount of included user gas is
$$
Q_u^\ast(\bsmax,\gpmin)
:=
\min\{\bsmax-\spc \spm^\ast(\bsmax,\gpmin),D(\gpmin)\}.
$$
Thus, users get only the block space left after spam.
We have $Q_u^\ast\le Q_u^0$, and the gap $Q_u^0-Q_u^\ast$ is the useful block space displaced by spam.

Let $\gp^\ast(\bsmax,\gpmin)$ denote the clearing price in the spam world, and let
$$
G^\ast(\bsmax,\gpmin)
:=
Q_u^\ast(\bsmax,\gpmin)+\spc\spm^\ast(\bsmax,\gpmin)
$$
denote the total gas sold in equilibrium.
Equivalently, in the congested region $G^\ast=\bsmax$, while in the slack region $G^\ast=D(\gpmin)+\spc\spm^\ast$.

\parhead{User welfare.}
Recall that user welfare is the sum of genuine users' utilities.
Under the linear demand model, it depends only on how much user gas is included.
Therefore, in the spam-free world,
$$
W_{\text{user}}^0(\bsmax,\gpmin)
=
\frac{(Q_u^0(\bsmax,\gpmin))^2}{2\beta},
$$
while in the spam world,
$$
W_{\text{user}}^\ast(\bsmax,\gpmin)
=
\frac{(Q_u^\ast(\bsmax,\gpmin))^2}{2\beta}.
$$
These level expressions show the main mechanism: under random ordering, spam lowers user welfare by reducing useful user gas.

\parhead{Validator revenue.}
Validator revenue is the amount of gas sold times the clearing price.
Without spam, revenue comes only from users:
$$
R^0(\bsmax,\gpmin)
=
\gp^0(\bsmax,\gpmin) Q_u^0(\bsmax,\gpmin).
$$
In the spam world, validator revenue is
$$
R^\ast(\bsmax,\gpmin)
=
\gp^\ast(\bsmax,\gpmin) G^\ast(\bsmax,\gpmin).
$$
Comparing $R^\ast$ and $R^0$ captures how spam can increase validator revenue even when it harms users.

\parhead{Externality.}
We note that when choosing the parameters, we cannot ignore the cost that the parameter choices have on the overall blockchain ecosystem.
Along the lines of Buterin~\cite{buterin2018resource}, we model the ecosystem cost of processing blocks as
$$
E(\bsmax,\gpmin)=c_1 \bsmax+c_2 G,
$$
where $c_1$ is the cost of provisioning block capacity and $c_2$ is the per-gas execution cost.
In the spam-free world,
$$
E^0(\bsmax,\gpmin)
=
c_1 \bsmax+c_2 Q_u^0(\bsmax,\gpmin),
$$
while in the spam world,
$$
E^\ast(\bsmax,\gpmin)
=
c_1 \bsmax+c_2 G^\ast(\bsmax,\gpmin).
$$
The capacity term is shared by both worlds, while the execution term can be larger in the spam world.
Note that this additional burden is not only borne by the validators, but by the overall ecosystem, such as higher hardware and bandwidth requirements for full nodes, which has second-order externalities on decentralization.
We use the term \emph{externality} following Buterin~\cite{buterin2018resource}, who draws a parallel to environmental pollution: although the block producer is compensated for including transactions, the cost of disseminating, executing, and storing them is borne by every full node in the network, much like pollution produced by one factory is suffered by the entire surrounding area.
The quantity $E(\bsmax,\gpmin)$ captures this collective burden.

\subsection{Counterfactual Delta}
We now record the gaps between the spam world and the spam-free world.
These deltas isolate the effect of spam itself while holding $(\bsmax,\gpmin)$ fixed.

\parhead{User welfare delta.}
The user-welfare change caused by spam is
\begin{align*}
\Delta W_{\text{user}}(\bsmax,\gpmin)
&=
W_{\text{user}}^\ast(\bsmax,\gpmin)
-
W_{\text{user}}^0(\bsmax,\gpmin)\\
&=
\frac{
(Q_u^\ast(\bsmax,\gpmin))^2
-
(Q_u^0(\bsmax,\gpmin))^2
}{2\beta}.
\end{align*}
This quantity is weakly negative.
It is zero when spam does not enter, and also when the block is large enough that spam no longer displaces users.

\parhead{Validator revenue delta.}
The change in validator revenue caused by spam is
$$
\Delta R_{\text{val}}(\bsmax,\gpmin)
=
R^\ast(\bsmax,\gpmin)-R^0(\bsmax,\gpmin).
$$
In our model, this term is always weakly positive: spam can raise validator revenue by increasing the clearing price, by filling otherwise empty block space, or by doing both.

\parhead{Externality delta.}
The increase in ecosystem cost caused by spam is
\begin{align*}
\Delta E(\bsmax,\gpmin)
&=
E^\ast(\bsmax,\gpmin)-E^0(\bsmax,\gpmin)\\
&=
c_2\bigl(G^\ast(\bsmax,\gpmin)-Q_u^0(\bsmax,\gpmin)\bigr).
\end{align*}
The capacity term $c_1\bsmax$ cancels because both worlds use the same block size.
Thus, $\Delta E$ measures the additional execution burden created by spam.

Figure~\ref{fig:welfare-deltas} summarizes these counterfactual gaps.

\begin{figure*}[h]
    \centering
    \includegraphics[width=0.9\linewidth]{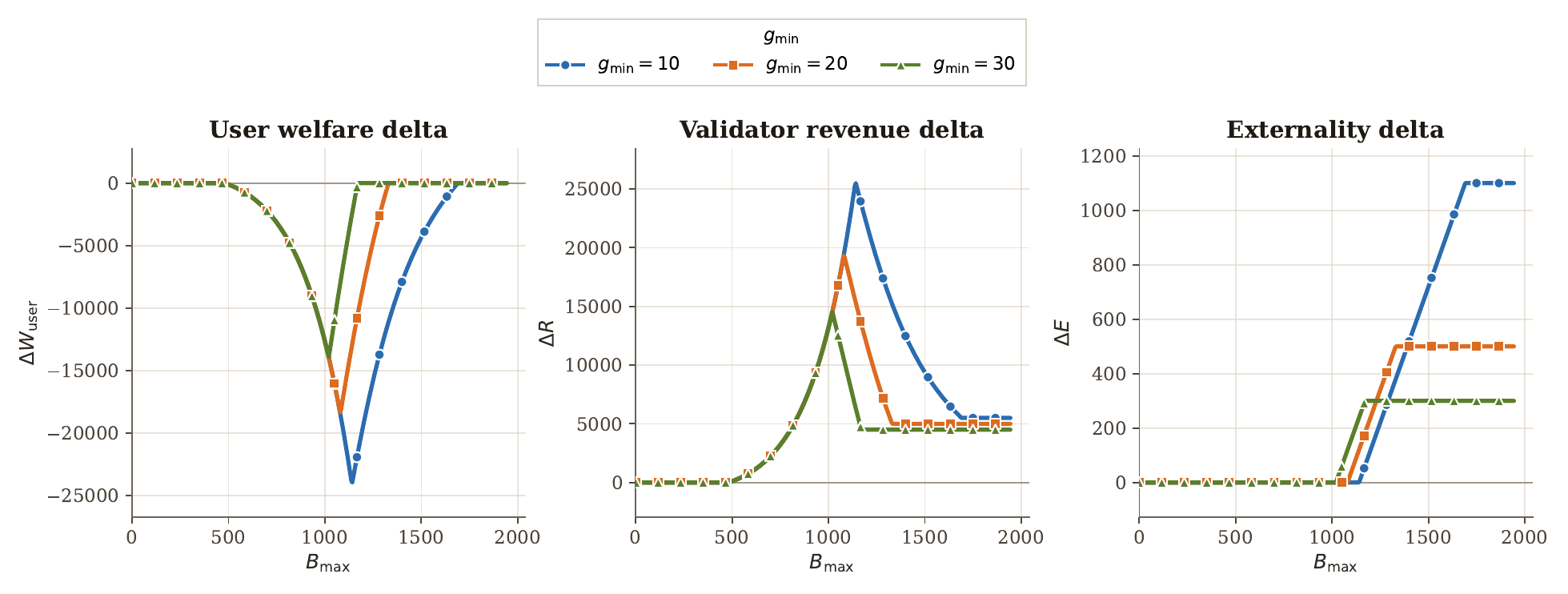}
    \caption{Counterfactual impact of spam as a function of block size. Negative $\Delta W_{\text{user}}$ means that spam lowers user welfare. Positive $\Delta R_{\text{val}}$ and $\Delta E$ mean that spam raises validator revenue and network externality, respectively.}
    \label{fig:welfare-deltas}
\end{figure*}

\parhead{User welfare loss peaks at block capacity of $D(\gpmin)$.}
The following result characterizes where spam is most harmful to users.

\begin{proposition}[Welfare Loss Peaks at $D(\gpmin)$]
\label{prop:welfare-peak}
Under the linear demand model $D(\gp) = D_0 - \beta\gp$, the user welfare loss $\Delta W_{\text{user}}(\bsmax)$ is most negative (i.e., users are most harmed) at $\bsmax = Q_{\min}$, where $Q_{\min}:=D(\gpmin)$.
\end{proposition}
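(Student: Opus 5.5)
The plan is to write the welfare loss as a product and track the monotonicity of each factor on either side of $Q_{\min}$. From \cref{app:random-ordering-metric},
$$
-\Delta W_{\text{user}}(\bsmax)=\frac{(Q_u^0)^2-(Q_u^\ast)^2}{2\beta}=\frac{(Q_u^0-Q_u^\ast)(Q_u^0+Q_u^\ast)}{2\beta},
$$
where $Q_u^0=\min\{\bsmax,Q_{\min}\}$ and $Q_u^\ast=\min\{\bsmax-\spc\spm^\ast,Q_{\min}\}$. The goal is to show that $-\Delta W_{\text{user}}$ is nondecreasing on $[0,Q_{\min}]$ and nonincreasing on $[Q_{\min},\infty)$. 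The maximum of the loss is then attained at $\bsmax=Q_{\min}$.

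\textbf{Step 1: regularity of $Q_u^\ast$.} The key input is that $Q_u^\ast(\bsmax)$ is continuous and nondecreasing in $\bsmax$, with slope in $[0,1]$ wherever it is differentiable. I would check this across the three regimes of $\spm^\ast$ from \cref{subsec:equilibrium-random}.
\begin{itemize}
\item \emph{No-entry regime.} Here $\spm^\ast=0$, so $Q_u^\ast=Q_u^0$ and the slope is $1$ or $0$.
\item \emph{Entry-and-congested regime.} Here $Q_u^\ast=\bsmax-\spc\spm^\ast$. Its slope is the explicit $m_{\text{user}}=\tfrac12\bigl(1-x/\sqrt{x^2+4\beta r_0}\bigr)$ with $x=\bsmax-D_0+\spc+\beta r_0/D_0$. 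Since $|x|<\sqrt{x^2+4\beta r_0}$, this slope lies strictly in $(0,1)$.
\item \emph{Slack-at-floor regime.} Here $Q_u^\ast=Q_{\min}$, which is constant.
\end{itemize}
Continuity at the regime boundaries should follow because the zero-profit condition $u(\spm)=0$ depends continuously on $\bsmax$. At the no-entry boundary, the congested formula for $\spm^\ast$ tends to $0$. At $B_{\mathrm{plat}}$, the congested solution meets the slack solution with price $\gpmin$.

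\textbf{Step 2: the region $\bsmax\le Q_{\min}$.} Here the spam-free price is at least $\gpmin$, so $Q_u^0=\bsmax$ has slope $1$. By Step 1, $Q_u^\ast$ has slope at most $1$. Hence $Q_u^0-Q_u^\ast\ge 0$ is nondecreasing. The factor $Q_u^0+Q_u^\ast$ is nonnegative and nondecreasing. A product of two nonnegative nondecreasing functions is nondecreasing, so the loss is nondecreasing on this interval.

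\textbf{Step 3: the region $\bsmax\ge Q_{\min}$.} Here $Q_u^0=Q_{\min}$ is constant. By Step 1, $Q_u^\ast\le Q_{\min}$ is nondecreasing, so $(Q_{\min})^2-(Q_u^\ast)^2$ is nonincreasing. Past $B_{\mathrm{plat}}$ this difference is exactly $0$. Combining the two regions gives the peak at $Q_{\min}$.

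The main obstacle is Step 1: making the piecewise argument rigorous at the regime switches. In particular, I need to confirm that $\spm^\ast$ has no jump when the equilibrium passes from the congested to the slack regime. I also need to confirm that $\bsmax=Q_{\min}$ itself falls in a regime where the argument applies (typically the congested regime, since $Q_{\min}<B_{\mathrm{plat}}$ whenever spam enters). I would handle this by showing directly that the congested-regime price $\gp(\spm^\ast)$ decreases continuously to $\gpmin$ as $\bsmax\uparrow B_{\mathrm{plat}}$. At that point the two expressions for $\spm^\ast$ coincide.
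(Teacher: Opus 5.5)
Your proposal is correct and follows essentially the same route as the paper: split at $Q_{\min}$, use that $Q_u^\ast$ has slope strictly between $0$ and $1$ in the entry-and-congested regime, and compare against $Q_u^0$, which has slope $1$ to the left of $Q_{\min}$ and is constant to the right. You read the slope bound off the closed-form $m_{\text{user}}$, whereas the paper obtains it by implicitly differentiating $\beta r_0 Q = D_0(D_0-Q)(\bsmax-Q+\spc)$; your continuity checks at the entry threshold and at $B_{\mathrm{plat}}$ (where the congested root equals $Q_{\min}$) make explicit a step the paper leaves implicit, and your product-of-monotone-factors argument is a derivative-free version of the paper's computation $\frac{d}{d\bsmax}\Delta W_{\text{user}}=(QQ'-\bsmax)/\beta<0$.
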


The proof is in \cref{app:proofs}.
For $\bsmax > Q_{\min}$, the spam-free welfare is constant while the spam-world welfare increases, so the gap shrinks.
For $\bsmax < Q_{\min}$, every marginal unit of capacity serves a user in the spam-free world, but some is absorbed by spam, so the gap widens.

\section{Choosing $\gpmin$ for a fixed $\bsmax$}
\label{app:param-gmin}

The discussion above fixes $\gpmin$ and chooses $\bsmax$.
In practice, the opposite situation also occurs: the designer may want to keep $\bsmax$ fixed and choose the gas price floor instead.
This is natural when block size is costly to change, or when the system operator wants to adjust spam volume through the gas price floor rather than through throughput.

\parhead{A baseline rule.}
Fix a block size $\bsmax$.
The simplest rule is to choose the smallest $\gpmin$ such that the current block size already equals the plateau threshold, that is,
$$
\gpmin^\dagger(\bsmax)
\ \text{such that}\ 
B_{\mathrm{plat}}(\gpmin^\dagger)=\bsmax.
$$
Because $B_{\mathrm{plat}}(\gpmin)$ is strictly decreasing in $\gpmin$, this rule defines a unique gas price floor.
If the floor is already high enough that spam does not enter, then the baseline choice is simply
$$
\gpmin^\dagger(\bsmax)=\frac{D_0-\bsmax}{\beta}.
$$
When the solution lies in the slack-with-spam region, we have
$$
\gpmin^\dagger(\bsmax)
=
\frac{
-(\bsmax-D_0+\spc+\beta r_0 / D_0)
+
\sqrt{(\bsmax-D_0+\spc+\beta r_0 / D_0)^2+4\beta r_0}
}{2\beta}.
$$
If the gas price floor were lower than $\gpmin^\dagger(\bsmax)$, then the current block size would be too small relative to the amount of spam and non-spam demand that wants to enter.
If the gas price floor were higher than $\gpmin^\dagger(\bsmax)$, then the block would be slack, but the system would be using a stronger fee floor than needed to reach that point.

\parhead{Choosing $\gpmin$ with the refined rule.}
The baseline rule makes the current block just sufficient, but it does not control how the newly admitted block usage is split between users and spam as the gas price floor is lowered.
Similar to the previous analysis, we define the user share of the newly admitted used capacity when the gas price floor is reduced from $\gpmin$ to $\gpmin-\Delta g$ (assume $\Delta g\to 0$):
$$
\mu_{\text{user}}
:=
\frac{-\frac{\partial Q_u^\ast(\bsmax,\gpmin)}{\partial \gpmin}}
{-\frac{\partial Q_u^\ast(\bsmax,\gpmin)}{\partial \gpmin}
-\frac{\partial (\spc\,\spm^\ast(\bsmax,\gpmin))}{\partial \gpmin}}.
$$
The minus signs appear because we are studying the effect of lowering the gas price floor.

\begin{proposition}
\label{prop:mu-user-gmin}
Fix $\bsmax$.
Then the local user share from lowering $\gpmin$ has the following form with the linear demand function:
\begin{enumerate}[leftmargin=*]
    \item If
    $
    \gpmin \ge \max\{\frac{D_0-\bsmax}{\beta},\frac{r_0 D_0}{D_0\spc+\beta r_0}\},
    $
    then no spam enters and $\mu_{\text{user}}=1$.
    In this region, lowering the gas price floor only admits additional users.
    \item If
    $
    \gpmin^\dagger(\bsmax)\le \gpmin < \frac{r_0 D_0}{D_0\spc+\beta r_0},
    $
    then the block remains slack and spam enters.
    In this region, lowering $\gpmin$ strictly decreases the fraction of newly admitted used capacity that goes to users.
    \item If
    $
    \frac{D_0-\bsmax}{\beta}<\gpmin<\gpmin^\dagger(\bsmax),
    $
    then the spam world is congested.
    In this region, the denominator in the definition of $\mu_{\text{user}}$ is zero.
\end{enumerate}
\end{proposition}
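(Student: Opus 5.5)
The plan is a case analysis on $\gpmin$ for fixed $\bsmax$, using the closed form for $\spm^\ast$ from \cref{subsec:equilibrium-random} and the expression $Q_u^\ast=\min\{\bsmax-\spc\spm^\ast,D(\gpmin)\}$ from \cref{app:random-ordering-metric}. First I will verify that the three ranges of $\gpmin$ in the statement match the three cases of the $\spm^\ast$ formula.

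The no-entry condition $r_0 Q_u^0/D_0\le \spc\hat{\gp}$ needs care. When $\gpmin\ge (D_0-\bsmax)/\beta$ we have $\hat\gp=\gpmin$ and $Q_u^0=D(\gpmin)$. The condition then reads $r_0(D_0-\beta\gpmin)\le D_0\spc\gpmin$, which rearranges to $\gpmin\ge r_0D_0/(D_0\spc+\beta r_0)$. This gives case (1): $\spm^\ast\equiv 0$ on an open neighbourhood (or one-sidedly at the boundary), so $\partial(\spc\spm^\ast)/\partial\gpmin=0$. Since $-\partial Q_u^\ast/\partial\gpmin=\beta>0$, we get $\mu_{\text{user}}=1$.

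For case (2), below the entry threshold the slack regime requires $\bsmax\ge B_{\mathrm{plat}}(\gpmin)$. By the stated monotonicity of $B_{\mathrm{plat}}$, this is equivalent to $\gpmin\ge\gpmin^\dagger(\bsmax)$.

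In the slack regime, $Q_u^\ast=D(\gpmin)=D_0-\beta\gpmin$, so the numerator of $\mu_{\text{user}}$ is $\beta$. The spam gas is
\[
\spc\spm^\ast=\frac{r_0(D_0-\beta\gpmin)}{D_0\gpmin}-\spc=\frac{r_0}{\gpmin}-\frac{\beta r_0}{D_0}-\spc,
\]
so $-\partial(\spc\spm^\ast)/\partial\gpmin=r_0/\gpmin^2$. Hence
\[
\mu_{\text{user}}=\frac{\beta}{\beta+r_0/\gpmin^2}=\frac{\beta\gpmin^2}{\beta\gpmin^2+r_0}.
\]
This is strictly increasing in $\gpmin$, so lowering $\gpmin$ strictly decreases $\mu_{\text{user}}$, which is the claim.

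For case (3), when $(D_0-\bsmax)/\beta<\gpmin<\gpmin^\dagger(\bsmax)$ the equilibrium is congested, so $G^\ast=Q_u^\ast+\spc\spm^\ast=\bsmax$ identically in a neighbourhood of $\gpmin$. Moreover, the congested branch of $\spm^\ast$ does not depend on $\gpmin$. Both partial derivatives therefore vanish, and in any case their sum, the derivative of $G^\ast$, is $0$, so the denominator is zero.

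I expect the main obstacle to be bookkeeping at the region boundaries rather than any calculation. Two points need checking. First, the congested interval is nonempty only when spam actually enters, i.e.\ when $\gpmin^\dagger(\bsmax)$ lies below the entry threshold. Second, case (1) as written also covers $\gpmin\ge(D_0-\bsmax)/\beta$ with no entry even if $\gpmin<\gpmin^\dagger$; in that situation $\gpmin^\dagger$ is defined by the no-entry convention $(D_0-\bsmax)/\beta$.

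I will resolve both by noting that $B_{\mathrm{plat}}(\gpmin)$ equals $D(\gpmin)$ exactly when no spam enters at the floor. Consequently, $\gpmin^\dagger$ coincides with $(D_0-\bsmax)/\beta$ whenever that value is above the entry threshold, and the three ranges partition $\{\gpmin>(D_0-\bsmax)/\beta\}$ consistently.
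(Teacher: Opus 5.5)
Your proposal is correct and follows the paper's proof essentially step for step: the no-entry rearrangement to $\gpmin\ge r_0D_0/(D_0\spc+\beta r_0)$, the slack-regime computation $\mu_{\text{user}}=\beta\gpmin^2/(\beta\gpmin^2+r_0)$, and the local independence of the congested allocation from $\gpmin$. Your extra bookkeeping is also correct and makes explicit what the paper leaves implicit: $\gpmin^\dagger(\bsmax)=(D_0-\bsmax)/\beta$ when that value is at or above the entry threshold, and otherwise $(D_0-\bsmax)/\beta<\gpmin^\dagger(\bsmax)<r_0D_0/(D_0\spc+\beta r_0)$, so the three ranges partition consistently.

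One small caveat, which the paper's statement shares, concerns your ``one-sidedly at the boundary'' remark. Because $\mu_{\text{user}}$ is defined by lowering the floor, the relevant derivative at the entry threshold is the left one, and there spam gas enters at rate $r_0/\gpmin^2$; when $(D_0-\bsmax)/\beta$ is the binding boundary, lowering the floor admits no users at all. So your justification uses the wrong side, and $\mu_{\text{user}}=1$ really holds only on the interior of region (1).
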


The proof can be found in \cref{app:proofs}.
\Cref{prop:mu-user-gmin} gives a simple refined rule for choosing $\gpmin$.
Fix a target $\eta\in(0,1]$ and require that at least an $\eta$ fraction of newly admitted used capacity go to users.
Inside the slack-with-spam region, the condition $\mu_{\text{user}}\ge \eta$ becomes
$$
\frac{\beta\,\gpmin^2}{\beta\,\gpmin^2+r_0}\ge \eta,
$$
which is equivalent to
$$
\gpmin \ge \sqrt{\frac{\eta r_0}{\beta(1-\eta)}}.
$$
Therefore, if the block size $\bsmax$ is fixed, a refined gas price floor choice is
\[
\displaystyle
\gpmin^\star(\bsmax,\eta)
=
\max\{
\gpmin^\dagger(\bsmax),
\min\{\frac{r_0 D_0}{D_0\spc+\beta r_0},\sqrt{\frac{\eta r_0}{\beta(1-\eta)}}\}
\}.
\]
In this equation, the term $\gpmin^\dagger(\bsmax)$ is the baseline floor that makes the current block size just sufficient, and the second term is the user-share threshold.
Therefore, the designer lowers $\gpmin$ only until either the current block size would cease to be sufficient, or the newly added capacity would become too spam-heavy.

\section{Results of Exponential Demand}
\label{app:exp-demand}

In the main body, we use the linear demand function $D(\gp)=D_0-\beta\gp$ because it gives closed-form expressions for the equilibrium spam volume and the welfare metrics.
Here we record the corresponding analysis under an exponential demand curve.
The purpose of this appendix is to show that the qualitative behavior in \cref{sec:random-ordering} does not rely on linear demand.

Let
$
D^{\exp}(\gp)=D_0 e^{-\lambda \gp}
$
for some $\lambda>0$.
The inverse demand curve is
$
P^{\exp}(Q)=\frac{1}{\lambda}\log\frac{D_0}{Q}.
$
As in the main text, let $Q_u$ denote included genuine-user gas and let the aggregate value of all non-interacting MEV opportunities scale linearly with included user gas:
$
r=r_0\cdot Q_u/D_0.
$
The floor demand is
$
Q_{\min}^{\exp}:=D^{\exp}(\gpmin)=D_0 e^{-\lambda\gpmin}.
$

\begin{figure}[h]
    \centering
    \includegraphics[width=0.8\linewidth]{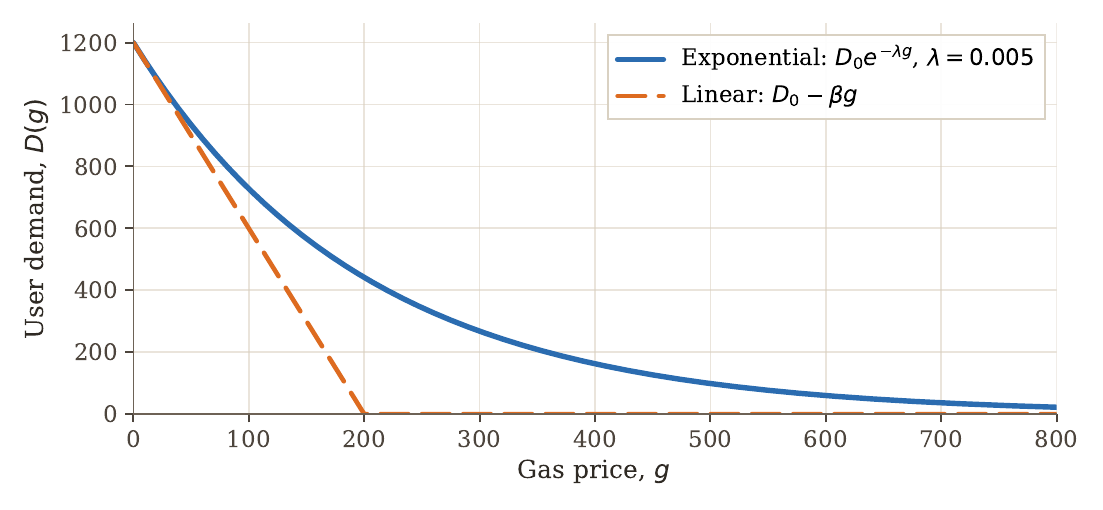}
    \caption{Exponential demand curve $D^{\exp}(\gp)=D_0e^{-\lambda \gp}$.
    }
    \label{fig:app-exp-demand}
\end{figure}

\parhead{Clearing price at a fixed spam volume.}
Fix a spam volume $\spm$.
Since each spam transaction reserves $\spc$ gas, the remaining capacity available to users is $\bsmax-\spc\spm$.
The amount of included user gas is
$$
Q_u(\spm)
=
\min\{
(\bsmax-\spc\spm)^+,
Q_{\min}^{\exp}
\}.
$$
The corresponding clearing price is
$$
\gp(\spm)
=
\max\{
\gpmin,
P^{\exp}((\bsmax-\spc\spm)^+)
\},
$$
where the inverse-demand term is used only when $\bsmax-\spc\spm>0$.
Equivalently, if $\bsmax-\spc\spm\ge Q_{\min}^{\exp}$, then the floor binds and $\gp(\spm)=\gpmin$.
If $0<\bsmax-\spc\spm<Q_{\min}^{\exp}$, then the block is congested and
$$
\gp(\spm)
=
\frac{1}{\lambda}
\log\frac{D_0}{\bsmax-\spc\spm}.
$$

Given this price, the expected utility of each spam transaction is
$$
u(\spm)
=
\frac{1}{\spm+1}\cdot
\frac{r_0 Q_u(\spm)}{D_0}
-
\spc \gp(\spm).
$$
The first term is the expected MEV revenue of each spam transaction, where the aggregate opportunity value is proportional to included user gas.
The second term is the inclusion cost.
As before, a competitive equilibrium is pinned down by the zero-profit condition $u(\spm^\ast)=0$, together with the no-entry condition when $u(0)\le 0$.

Let
$
Q_u^0=\min\{\bsmax,Q_{\min}^{\exp}\}$ and
$\hat{\gp}
=
\max\{
\gpmin,
P^{\exp}(Q_u^0)
\}
$
denote the spam-free user gas and the spam-free clearing price.
Spam does not enter if
$
r_0\frac{Q_u^0}{D_0}\le \spc \hat{\gp}.
$

Unlike the linear-demand case, this root generally does not have a simple closed form, so we solve it numerically, and show the results in \cref{fig:app-exp-spam-volume}.

\begin{figure}[h]
    \centering
    \includegraphics[width=0.7\linewidth]{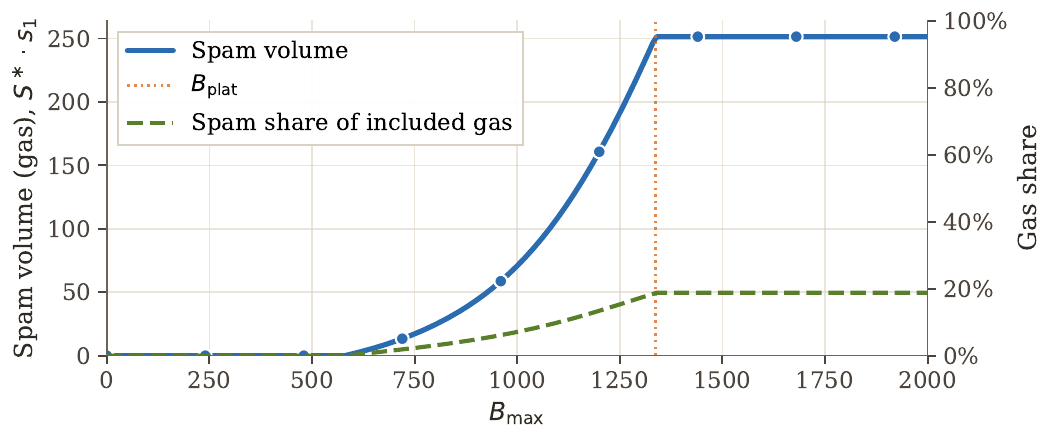}
    \caption{Equilibrium spam volume and spam share under exponential demand.
    Qualitatively, this yields similar results to linear demand.}
    \label{fig:app-exp-spam-volume}
\end{figure}

\parhead{User welfare.}
For exponential demand, consumer surplus has a simple form.
For included user gas $Q$, we have
$$
\int_{0}^{Q}
(P^{\exp}(q)-P^{\exp}(Q))dq
=
\frac{Q}{\lambda}.
$$
Thus, under exponential demand, user welfare depends linearly on the amount of included user gas
$
W_{\mathrm{user}}(Q)=\frac{Q}{\lambda}.
$
In the spam-free world,
$
W_{\mathrm{user}}^0(\bsmax,\gpmin)
=
\frac{Q_u^0(\bsmax,\gpmin)}{\lambda},
$
and in the spam world,
$
W_{\mathrm{user}}^\ast(\bsmax,\gpmin)
=
\frac{Q_u^\ast(\bsmax,\gpmin)}{\lambda}.
$
The user-welfare delta is therefore
$
\Delta W_{\mathrm{user}}(\bsmax,\gpmin)
=
\frac{Q_u^\ast(\bsmax,\gpmin)-Q_u^0(\bsmax,\gpmin)}{\lambda}.
$
As in the linear-demand model, this quantity is weakly negative and is zero whenever spam does not displace users.

\parhead{Validator revenue and externality.}
The validator-revenue and externality expressions are unchanged except that the clearing prices and user quantities are now computed using exponential demand.
In the spam-free world,
$$
R^0(\bsmax,\gpmin)=\gp^0 Q_u^0,
\qquad
E^0(\bsmax,\gpmin)=c_1\bsmax+c_2Q_u^0.
$$
In the spam world, let
$
G^\ast=Q_u^\ast+\spc\spm^\ast
$
denote total gas sold.
Then
$$
R^\ast(\bsmax,\gpmin)=\gp^\ast G^\ast,
\qquad
E^\ast(\bsmax,\gpmin)=c_1\bsmax+c_2G^\ast.
$$
The counterfactual deltas are
$$
\Delta R_{\mathrm{val}}=R^\ast-R^0,
\qquad
\Delta E=E^\ast-E^0=c_2(G^\ast-Q_u^0).
$$

\begin{figure*}[h]
    \centering
    \includegraphics[width=\linewidth]{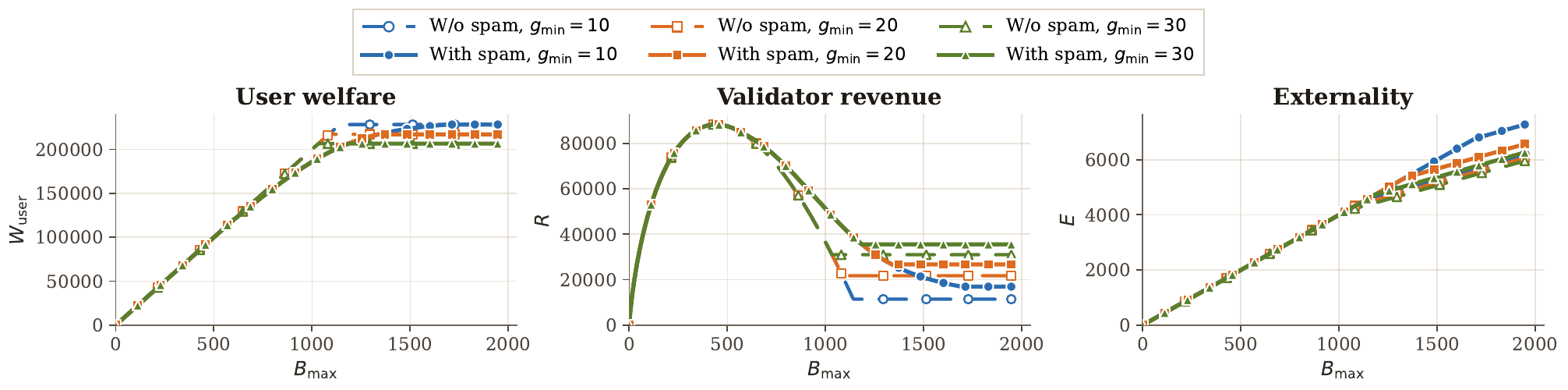}
    \caption{User welfare, validator revenue, and externality under exponential demand, with and without spam.
    }
    \label{fig:app-exp-welfare}
\end{figure*}

\parhead{Marginal user share.}
The sufficient condition in \cref{prop:mmus-general} also holds for exponential demand.
Indeed, for $D^{\exp}(\gp)=D_0e^{-\lambda \gp}$, we have
$
(D^{\exp})'(\gp)=-\lambda D^{\exp}(\gp)
$
and
$
(D^{\exp})''(\gp)=\lambda^2D^{\exp}(\gp).
$
Therefore,
\begin{align*}
&\gp D^{\exp}(\gp)(D^{\exp})''(\gp)
+
2D^{\exp}(\gp)(D^{\exp})'(\gp)
-
2\gp\bigl((D^{\exp})'(\gp)\bigr)^2\\
&\qquad
=
-\lambda\bigl(2+\lambda\gp\bigr)\bigl(D^{\exp}(\gp)\bigr)^2
<0.
\end{align*}
Hence the marginal user share $m_{\mathrm{user}}$ is decreasing in $\bsmax$ under exponential demand as well.

\section{Spam Equilibrium under Gas-Used Charging}
\label{app:gas-used-charging}

In \cref{sec:random-ordering}, we assume that each spam transaction is charged for its gas limit $\spc$, independently of whether the transaction finds and executes a profitable opportunity.
Here, we consider an alternative TFM that instead charges each transaction for its realized gas use.
To isolate the effect of the charging rule, we retain the admission and block-capacity model from the main text: every spam transaction declares and reserves a gas limit $\spc$, so $\spc\spm$ units of block capacity are unavailable to genuine users.
However, a spam transaction that only probes and finds no opportunity is charged for $\spp$ gas, while a transaction that captures at least one opportunity is charged for $\spc$ gas.
We assume $0<\spp\le \spc$.
Thus, $\spc-\spp$ is the additional gas consumed when a probe proceeds to execution.

For a fixed spam volume $\spm$, the included user gas and clearing price remain
$
Q_u(\spm)
=
\min\{
\bigl(\bsmax-\spc\spm\bigr)^+,
D(\gpmin)
\}
$
and
$
\gp(\spm)
=
\max\{
\gpmin,
\frac{D_0-\bsmax+\spc\spm}{\beta}
\}.
$
As in the main text, the aggregate opportunity value is
$
r(\spm)=r_0 Q_u(\spm)/D_0.
$

\parhead{Expected utility with multiple opportunities.}
Recall from \cref{sec:random-ordering} that the expected revenue of each spam transaction is $r(\spm)/(\spm+1)$.
The expected gas use additionally depends on whether the spam transaction captures at least one opportunity.
Let $\ell$ denote the number of opportunities.
Consider the relative ordering of the $\ell$ opportunities and the $\spm$ spam transactions.
A fixed spam transaction captures at least one opportunity exactly when the preceding element in this restricted ordering is an opportunity-generating event.
Under random ordering, this occurs with probability $\ell/(\ell+\spm)$.
Its expected gas use is therefore
$
\spp+(\spc-\spp)\frac{\ell}{\ell+\spm}
=
\frac{\spp\spm+\ell\spc}{\ell+\spm}.
$

Consequently, the expected utility of each spam transaction is
$$
u_{\mathrm{used},\ell}(\spm)
=
\frac{r_0Q_u(\spm)}{D_0(\spm+1)}
-
\gp(\spm)
(
\spp+(\spc-\spp)\frac{\ell}{\ell+\spm}
).
$$
The competitive free-entry condition is
$$
\frac{r_0Q_u(\spm^\ast)}{D_0(\spm^\ast+1)}
=
\gp(\spm^\ast)
(
\spp+(\spc-\spp)\frac{\ell}{\ell+\spm^\ast}
).
$$
Unlike under gas-limit charging, aggregate opportunity value alone is not sufficient to characterize the equilibrium.
The number $\ell$ of separately generated opportunities also matters because it determines how often a spam transaction proceeds from probing to execution.

Let
$
Q_u^0=\min\{\bsmax,D(\gpmin)\}
$
and
$
\hat{\gp}
=
\max\{
\gpmin,\frac{D_0-\bsmax}{\beta}
\}
$
denote the user gas and clearing price in the spam-free world.
Evaluating the continuous free-entry condition as $\spm\downarrow0$ gives the no-entry condition
$
r_0Q_u^0/D_0\le \spc\hat{\gp}.
$

In the slack region, $Q_u^\ast=D(\gpmin)$ and $\gp^\ast=\gpmin$.
Define
$
A_{\min}:=r_0D(\gpmin)/(D_0\gpmin).
$
The zero-profit condition becomes
$$
\spp(\spm^\ast)^2
+
\bigl(\spp+\ell\spc-A_{\min}\bigr)\spm^\ast
+
\ell\bigl(\spc-A_{\min}\bigr)
=
0.
$$
When $A_{\min}>\spc$, the positive root is
$$
\spm_{\mathrm{slack},\ell}^{\mathrm{used}}
=
\frac{
A_{\min}-\spp-\ell\spc
+
\sqrt{
\bigl(A_{\min}-\spp-\ell\spc\bigr)^2
+
4M\spp\bigl(A_{\min}-\spc\bigr)
}
}{
2\spp
}.
$$
This candidate is a slack equilibrium when
$
D(\gpmin)+\spc\spm_{\mathrm{slack},\ell}^{\mathrm{used}}
\le \bsmax.
$
The corresponding plateau threshold is therefore
$
B_{\mathrm{plat},\ell}^{\mathrm{used}}
=
D(\gpmin)+\spc\spm_{\mathrm{slack},\ell}^{\mathrm{used}}.
$

In the congested region, the block is full, so
$
Q_u^\ast=\bsmax-\spc\spm^\ast
$
and
$
\gp^\ast=(D_0-\bsmax+\spc\spm^\ast)/\beta.
$
Let $\Delta=D_0-\bsmax$.
Substituting these expressions into the free-entry condition gives
$$
\beta r_0
\bigl(\bsmax-\spc\spm^\ast\bigr)
\bigl(\ell+\spm^\ast\bigr)
=
D_0
\bigl(\spm^\ast+1\bigr)
\bigl(\Delta+\spc\spm^\ast\bigr)
\bigl(\spp\spm^\ast+\ell\spc\bigr).
$$
This is generally a cubic equation in $\spm^\ast$.
The congested equilibrium is the economically relevant nonnegative root satisfying $\spc\spm^\ast\le\bsmax$ and $\gp^\ast>\gpmin$, and can be computed numerically.

Combining the regimes, spam does not enter when
$
r_0Q_u^0/D_0\le\spc\hat{\gp}.
$
Otherwise, the equilibrium is $\spm_{\mathrm{slack},\ell}^{\mathrm{used}}$ when the slack feasibility condition holds, and is the relevant root of the congested equation above when it does not.

When $\spp=\spc$, every spam transaction pays for the same amount of gas, whether or not it captures an opportunity.
The expected gas cost then becomes $\spc\gp(\spm)$, the dependence on $\ell$ disappears, and the equilibrium reduces to the gas-limit model in \cref{subsec:equilibrium-random}.
When $\spp<\spc$, failed probes are cheaper, so gas-used charging weakens the effect created by transaction fees and sustains a larger spam volume in the slack region.

\section{Priority Fee Ordering}
\subsection{Equilibrium Analysis}
\label{app:pfo-equilibrium}

In this appendix, we generalize the $n=2$ model from the main body to an arbitrary number $n$ of sub-blocks.
Unlike the main body, which uses a free parameter $v$ for the top-block demand share, here we assume an even partition of the original linear demand curve across sub-blocks.

Let $C=\bsmax/n$ denote the capacity of each sub-block.
Let $S_i$ denote the number of spam transactions in sub-block $i$, and let each spam transaction reserve $\spc$ gas.
For $i=1,\dots,n$, define the inverse demand of sub-block $i$ by
$$
P_i(Q)=\frac{\frac{n-i+1}{n}D_0-Q}{\beta},
\qquad
0\le Q\le \frac{D_0}{n}.
$$
Equivalently, the direct demand is
$$
D_i(\gp)
=
(
\min\{
\frac{D_0}{n},
\frac{n-i+1}{n}D_0-\beta\gp
\}
)^+.
$$
These demand slices sum to the original linear demand:
$$
\sum_{i=1}^{n}D_i(\gp)=D_0-\beta\gp.
$$
Thus, each sub-block is associated with a distinct slice of the original valuation distribution, and users from earlier slices do not spill into later sub-blocks.

For a candidate spam profile $\mathbf{S}=(S_1,\dots,S_n)$ and a candidate block-clearing price $\gpu$, define the included user gas in sub-block $i$ by
$$
Q_i(\mathbf{S};\gpu)
=
(
\min\{
C-S_i\spc,
D_i(\gpu)
\}
)^+.
$$
The clearing price of sub-block $i$ is then
$$
\gp_i(\mathbf{S};\gpu)
=
\max\{
\gpu,
P_i(Q_i(\mathbf{S};\gpu))
\}.
$$
Because the demand slices are nested, these prices are automatically weakly decreasing down the block.

The total included user gas is
$$
Q_u(\mathbf{S};\gpu)=\sum_{i=1}^{n}Q_i(\mathbf{S};\gpu).
$$
As in the main text, we allow multiple non-interacting opportunities.
The aggregate value of the opportunities created by user gas in sub-block $i$ is assumed to be proportional to the amount of user gas in that sub-block, and is given by
$$
R_i(\mathbf{S};\gpu)
:=
\frac{r_0}{D_0}Q_i(\mathbf{S};\gpu).
$$
This is the sub-block analogue of the assumption that total opportunity value scales with included user gas.
Only this aggregate value matters for the spam equilibrium.
Indeed, if sub-block $i$ contains opportunities $\ell$ with values $r_\ell$, then for any fixed spam volume the expected captured value is the sum of the expected captured values of the individual opportunities.
Thus, by linearity of expectation, multiple non-interacting opportunities in the same sub-block are equivalent to a single aggregate opportunity of value $R_i(\mathbf{S};\gpu)$.

\parhead{Top and second top sub-blocks.}
Given $S_1$ spam transactions in the top sub-block, each opportunity created in that sub-block is captured by spam in the same sub-block with probability $S_1/(S_1+1)$.
This is the same relative-ordering argument as in \cref{subsec:equilibrium-random}, applied to each opportunity separately.
By linearity of expectation over all opportunities created in the top sub-block, the aggregate expected utility of spam in the top sub-block is
$$
U_1(S_1)
=
\frac{r_0}{D_0}Q_1(S_1;\gpu)\frac{S_1}{S_1+1}
-
S_1\spc\,\gp_1(S_1;\gpu).
$$
Setting $U_1(S_1^\ast)=0$ gives the equilibrium spam volume in the top sub-block, subject to the capacity constraint $S_1^\ast\spc\le C$.

For sub-block $2$, spam can capture two sources of value.
First, it can capture opportunities created in sub-block $2$ itself.
The aggregate expected value of these opportunities captured in sub-block $2$ is
$$
\frac{r_0}{D_0}Q_2(S_2;\gpu)\frac{S_2}{S_2+1}.
$$
Second, opportunities created in sub-block $1$ may fail to be captured there.
Each such opportunity survives sub-block $1$ with probability $1/(S_1^\ast+1)$.
The aggregate value of surviving opportunities from sub-block $1$ is therefore
$$
\frac{r_0}{D_0}\frac{Q_1^\ast}{S_1^\ast+1}.
$$
If $S_2>0$, these surviving opportunities reach sub-block $2$ and are captured by the first spam transaction in that sub-block.
Thus, for $S_2>0$, the aggregate expected utility of spam in the second sub-block is
$$
U_2(S_2)
=
\frac{r_0}{D_0}
(
Q_2(S_2;\gpu)\frac{S_2}{S_2+1}
+
\frac{Q_1^\ast}{S_1^\ast+1}
)
-
S_2\spc\,\gp_2(S_2;\gpu).
$$
When $S_2=0$, no spam transaction in the second sub-block is present to capture surviving opportunities, so we take $U_2(0)=0$.
Solving $U_2(S_2^\ast)=0$ gives the second-sub-block equilibrium, subject to the capacity constraint $S_2^\ast\spc\le C$.

\parhead{General sub-block.}
For sub-block $i\ge 3$, suppose that $S_1^\ast,\dots,S_{i-1}^\ast$ have already been computed.
The aggregate expected revenue of spam in sub-block $i$ again has two components: opportunities created in sub-block $i$ and captured there, and opportunities created in earlier sub-blocks that remain unclaimed until sub-block $i$.

Let $h_i$ be the index of the last earlier sub-block with positive spam, that is,
$$
h_i=\max\{j<i:S_j^\ast>0\},
$$
with the convention $h_i=0$ if no earlier sub-block has positive spam.
Let
$$
k_i=i-1-h_i
$$
be the number of contiguous zero-spam sub-blocks immediately before sub-block $i$.
If opportunities are created in any of those $k_i$ zero-spam sub-blocks, then they reach sub-block $i$ alive and are captured there by any positive spam.
The aggregate value of those opportunities is therefore
$$
\frac{r_0}{D_0}\sum_{j=h_i+1}^{i-1}Q_j^\ast.
$$
If $h_i\ge 1$, then there is one additional spillover source: opportunities created in sub-block $h_i$ may fail to be captured there.
Each such opportunity survives with probability $1/(S_{h_i}^\ast+1)$.
Since all sub-blocks between $h_i$ and $i$ have zero spam, those surviving opportunities then reach sub-block $i$ alive.
This contributes
$$
\frac{r_0}{D_0}\frac{Q_{h_i}^\ast}{S_{h_i}^\ast+1}.
$$
Putting these terms together, for $S_i>0$, the aggregate expected utility of spam in sub-block $i$ is
$$
U_i(S_i)
=
\frac{r_0}{D_0}
(
Q_i(S_i;\gpu)\frac{S_i}{S_i+1}
+
\sum_{j=h_i+1}^{i-1}Q_j^\ast
+
\mathbf{1}[h_i\ge 1]\frac{Q_{h_i}^\ast}{S_{h_i}^\ast+1}
)
-
S_i\spc\,\gp_i(S_i;\gpu).
$$
When $S_i=0$, no spam transaction in sub-block $i$ is present to capture surviving opportunities, so we take $U_i(0)=0$.
We solve $U_i(S_i^\ast)=0$ iteratively for $i=1,\dots,n$.
If $U_i(S_i)$ remains nonnegative all the way up to the capacity boundary $S_i=C/\spc$, then the equilibrium for that sub-block is the corner solution $S_i^\ast=C/\spc$.

This gives a spam profile $S_1^\ast,\dots,S_n^\ast$.
The equilibrium is then closed by the fixed-point condition
$$
\gpu^\ast
=
\max\{
\gpmin,
P_n(Q_n(S_n^\ast;\gpu^\ast))
\}.
$$
The total equilibrium spam volume is
$
S^\ast=\sum_{i=1}^n S_i^\ast,
$
and the total included user gas is
$
Q_u^\ast=\sum_{i=1}^{n}Q_i^\ast.
$

\subsection{Derivation of Welfare Metrics}
\label{app:pfo-metrics}

We now record the outcome metrics for approximate PFO.
We first give the expressions for the two-sub-block model used in the main body, where the first sub-block has capacity $C_1=v\bsmax$ and the second sub-block has capacity $C_2=(1-v)\bsmax$.
We then state the corresponding expressions for the general $n$-sub-block model from \cref{app:pfo-equilibrium}.

\parhead{Two-sub-block model.}
Recall that the first sub-block is associated with the upper slice of the demand curve, with inverse demand $P_1^{(v)}$, while the second sub-block is associated with the lower slice, with inverse demand $P_2^{(v)}$.
Changing $v$ therefore changes both the capacity allocated to the early block region and the distribution of user valuations across sub-blocks.

\parhead{User welfare.}
Because the two sub-blocks correspond to different segments of the original demand curve, welfare is computed separately in each segment.
In the spam world, users in the first sub-block pay $\gp_1^\ast$, while users in the second sub-block pay $\gpu^\ast$.
Therefore,
$$
W_{\text{user}}(\bsmax)
=
\int_{0}^{Q_1^\ast}
(P_1^{(v)}(q)-\gp_1^\ast)\,dq
+
\int_{0}^{Q_2^\ast}
(P_2^{(v)}(q)-\gpu^\ast)\,dq.
$$
The no-spam benchmark is obtained by setting $S_1=S_2=0$ and recomputing the two clearing prices and user quantities.

\parhead{Validator revenue.}
Each unit of gas in sub-block $1$ pays $\gp_1^\ast$, while each unit of gas in sub-block $2$ pays $\gpu^\ast$.
The total gas sold in each sub-block is the sum of user gas and spam gas.
Therefore,
$$
R(\bsmax)
=
\gp_1^\ast\cdot\bigl(Q_1^\ast+S_1^\ast\spc\bigr)
+
\gpu^\ast\cdot\bigl(Q_2^\ast+S_2^\ast\spc\bigr).
$$

\parhead{Externality.}
As in \cref{subsec:metrics}, we model the cost of supporting larger blocks as a capacity cost plus a per-gas execution cost.
In the spam world,
$$
E(\bsmax)
=
c_1\bsmax
+
c_2\bigl(Q_1^\ast+Q_2^\ast+\spc S_1^\ast+\spc S_2^\ast\bigr).
$$
The no-spam counterfactual is obtained by setting $S_1=S_2=0$.

\begin{figure*}[t]
    \centering
    \includegraphics[width=\textwidth]{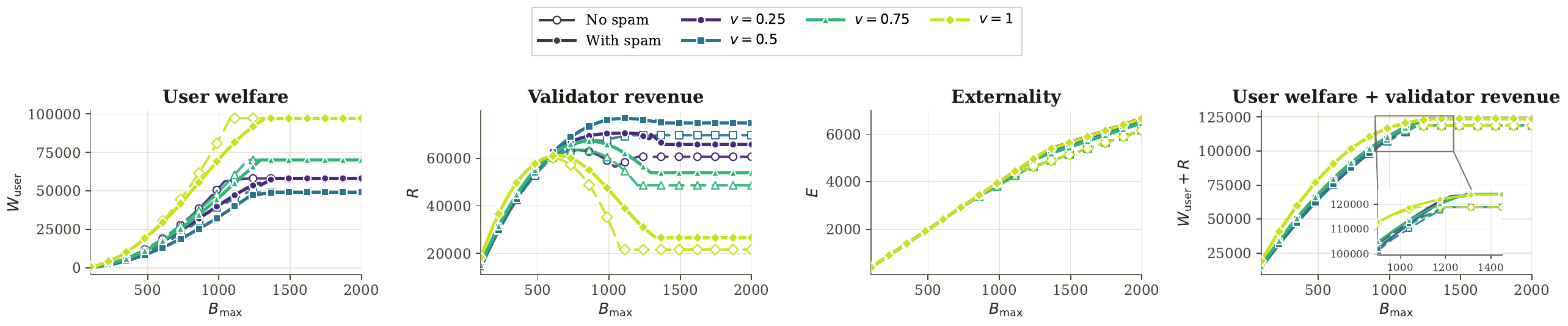}
    \caption{User welfare, validator revenue, externality, and $W_{\mathrm{user}}+R$ under approximate priority fee ordering in the two-sub-block model.
    The solid curves show the spam world and the dashed curves show the no-spam counterfactual.
    Each color corresponds to a different value of $v$.}
    \label{fig:pfo-welfare}
\end{figure*}

\Cref{fig:pfo-welfare} compares the spam and no-spam outcomes under approximate PFO.

\parhead{General $n$-sub-block model.}
For completeness, we also record the corresponding metrics for the general $n$-sub-block model from \cref{app:pfo-equilibrium}.
In that model, each sub-block corresponds to its own slice of the original demand curve.
Users in sub-block $i$ have inverse demand $P_i$ and pay price $\gp_i^\ast=\gp_i(S_i^\ast;\gpu^\ast)$.
Their consumer surplus is therefore
$$
\int_{0}^{Q_i^\ast}(P_i(q)-\gp_i^\ast)\,dq.
$$
Summing over sub-blocks gives total user welfare:
$$
W_{\text{user}}(\bsmax)
=
\sum_{i=1}^{n}
\int_{0}^{Q_i^\ast}(P_i(q)-\gp_i^\ast)\,dq.
$$

Each unit of gas in sub-block $i$ pays $\gp_i^\ast$.
The total gas sold in sub-block $i$ is $Q_i^\ast+S_i^\ast\spc$.
Therefore, validator revenue is
$$
R(\bsmax)
=
\sum_{i=1}^{n}\gp_i^\ast\cdot\bigl(Q_i^\ast+S_i^\ast\spc\bigr).
$$

Using the same system-cost model as above, the externality is
$$
E(\bsmax)
=
c_1\bsmax+c_2\sum_{i=1}^{n}\bigl(Q_i^\ast+S_i^\ast\spc\bigr).
$$
The no-spam benchmark is obtained by setting $S_i=0$ for every sub-block and recomputing the induced prices and user quantities.

\section{Spam Volume Share under Demand Scaling with Approximate Priority Fee Ordering}
\label{app:demand-scaling-pfo}

We ask whether priority fee ordering changes the scaling picture.
To keep the comparison aligned with the random-ordering analysis in the main body, we use the same block-size benchmark.
For each value of $\lambda$, we set $\bsmax=\bs_{\mathrm{plat},\lambda}$, where $\bs_{\mathrm{plat},\lambda}$ is the random-ordering plateau size defined in \cref{sec:demand-scaling}.
Thus, total provisioned capacity is held fixed at the same benchmark level, while the allocation of block capacity across execution positions changes according to the two-sub-block PFO model.

For a given value of $v$, let $\spm_\lambda^{(v)}$ denote the equilibrium spam volume in the two-sub-block PFO model, and let $Q_{u,\lambda}^{(v)}$ denote the corresponding included user gas.
Recall that $v$ is the fraction of both block capacity and user valuation mass assigned to the first, higher-priority sub-block.
We define the spam share of included gas as
$$
\rho_{\mathrm{spam}}^{(v)}(\lambda)
=
\frac{\spc\,\spm_\lambda^{(v)}}{\spc\,\spm_\lambda^{(v)}+Q_{u,\lambda}^{(v)}}.
$$
The case $v=1$ collapses to a single sub-block and is therefore equivalent to the random-ordering benchmark.

Figure~\ref{fig:demand-scaling-pfo} compares the two-sub-block PFO outcomes for $v\in\{0.25,0.5,0.75,1.0\}$.
We fix $\gpmin=20$ and size the block using $\bs_{\mathrm{plat},\lambda}$.
The figure shows that PFO can reduce the spam share relative to the $v=1$ benchmark.

The qualitative scaling pattern remains similar to the random-ordering case.
Across the curves, the spam share approaches a positive plateau as $\lambda$ grows.
Thus, approximate PFO changes the level of spam pressure, but does not change the basic conclusion that linear opportunity scaling preserves a nontrivial spam share.
The role of PFO is therefore to reduce the share of spam in favorable parameter regimes.

\begin{figure}[t]
    \centering
    \includegraphics[width=0.7\linewidth]{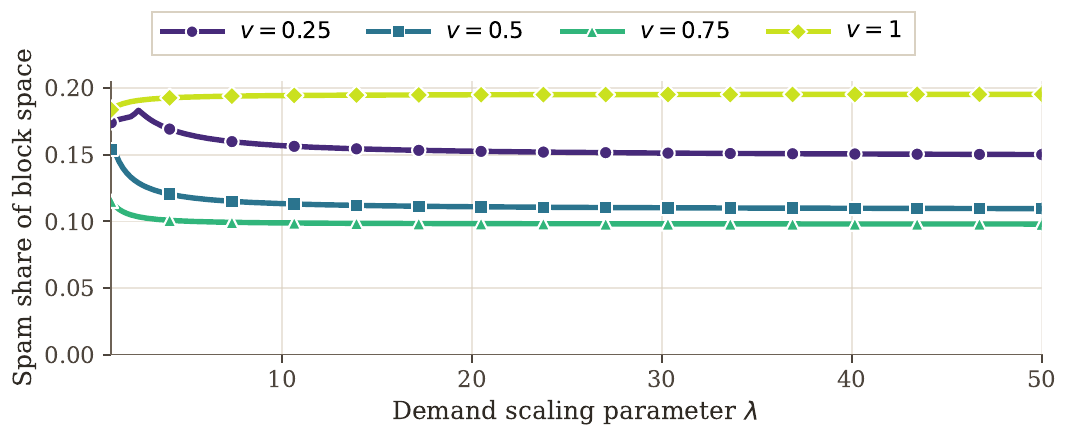}
    \caption{Spam volume share under demand scaling with approximate priority fee ordering.
    The x-axis is the demand-scaling parameter $\lambda$.
    The y-axis is the fraction of included gas consumed by spam.
    All curves are evaluated at the same block-size benchmark $\bsmax=\bs_{\mathrm{plat},\lambda}$.
    The curves show the two-sub-block PFO outcome for $v\in\{0.25,0.5,0.75,1.0\}$, where $v$ is the fraction of both block capacity and user valuation mass assigned to the first sub-block.
    The case $v=1$ collapses to the random-ordering benchmark.}
    \label{fig:demand-scaling-pfo}
\end{figure}

\section{Proofs}
\label{app:proofs}

\subsection{Proof of \texorpdfstring{\cref{prop:welfare-peak}}{Proposition (Welfare Loss Peak)}}

We work with the linear demand $D(\gp)=D_0-\beta\gp$, inverse demand $P(Q)=(D_0-Q)/\beta$, and $Q_{\min}:=D_0-\beta\gpmin$.
Consumer surplus is $W(Q)=Q^2/(2\beta)$.
Recall that the opportunity value is now endogenous and equal to $r=r_0 Q_u/D_0$.
If $r_0 Q_{\min}/D_0\le \spc\gpmin$, then spam does not enter for any $\bsmax$, so $\Delta W_{\text{user}}\equiv 0$ and the claim is trivial.
We therefore assume that entry occurs.

In the congested regime, let $Q:=Q_u^\ast$.
Free entry, block fullness, and the linear inverse demand imply $\frac{r_0 Q}{D_0(\spm^\ast+1)}=\spc\gp^\ast$, $\bsmax=Q+\spc\spm^\ast$, and $\gp^\ast=\frac{D_0-Q}{\beta}$.
Eliminating $\spm^\ast$ and $\gp^\ast$ gives
$$
\beta r_0 Q = D_0(D_0-Q)(\bsmax-Q+\spc).
$$
Differentiating implicitly with respect to $\bsmax$ yields
$$
Q'(\bsmax)
=
\frac{D_0(D_0-Q)}{\beta r_0 + D_0(\bsmax+\spc+D_0-2Q)}.
$$
Hence $0<Q'(\bsmax)<1$, since $Q<\bsmax$ in the spam world and the denominator exceeds the numerator by
$$
\beta r_0 + D_0(\bsmax+\spc-Q)>0.
$$

\medskip
\noindent\textbf{Right of $Q_{\min}$ (loss shrinks).}
The spam-free world is slack, so $W^0=Q_{\min}^2/(2\beta)$ is constant.
For $Q_{\min}<\bsmax<B_{\mathrm{plat}}$, the spam world is congested and
$$
W^\ast=\frac{Q(\bsmax)^2}{2\beta},
\qquad
\frac{dW^\ast}{d\bsmax}=\frac{Q(\bsmax)Q'(\bsmax)}{\beta}>0.
$$
Thus $\Delta W_{\text{user}}=W^\ast-W^0$ increases toward $0$ as $\bsmax$ grows.
Once $\bsmax\ge B_{\mathrm{plat}}$, the spam world is also slack and $\Delta W_{\text{user}}=0$.

\medskip
\noindent\textbf{Left of $Q_{\min}$ (loss grows).}
Below the spam-entry threshold, $\Delta W_{\text{user}}=0$.
Once spam enters and $\bsmax<Q_{\min}$, both worlds are congested:
$$
W^0=\frac{\bsmax^2}{2\beta},
\qquad
W^\ast=\frac{Q(\bsmax)^2}{2\beta}.
$$
Therefore
$$
\Delta W_{\text{user}}
=
W^\ast-W^0
=
\frac{Q(\bsmax)^2-\bsmax^2}{2\beta},
$$
and
$$
\frac{d}{d\bsmax}\Delta W_{\text{user}}
=
\frac{Q(\bsmax)Q'(\bsmax)-\bsmax}{\beta}
<0,
$$
because $Q(\bsmax)<\bsmax$ and $0<Q'(\bsmax)<1$.
So the welfare loss becomes more negative as $\bsmax$ increases up to $Q_{\min}$.

Combining the two regions, $\Delta W_{\text{user}}(\bsmax)$ decreases on the left of $Q_{\min}$ and increases on the right of $Q_{\min}$.
Therefore it is most negative at $\bsmax=Q_{\min}$.
\qed

\subsection{Proof of \texorpdfstring{\cref{prop:mmus-general}}{Proposition (MMUS)}}
\begin{proof}
In the entry-and-congested region, free entry implies
$$
\frac{r_0 D(\gp^\ast)}{D_0(\spm^\ast+1)}=\spc\,\gp^\ast,
$$
so
$$
\spm^\ast=\frac{r_0 D(\gp^\ast)}{D_0\spc\,\gp^\ast}-1.
$$
Because the block is congested, it is full, so
$$
D(\gp^\ast)+\spc\,\spm^\ast=\bsmax.
$$
Substituting the expression for $\spm^\ast$ gives
$$
r_0 D(\gp^\ast)=D_0\gp^\ast\bigl(\bsmax-D(\gp^\ast)+\spc\bigr).
$$

Differentiating both sides with respect to $\bsmax$ yields
$$
r_0 D'(\gp^\ast)\frac{\partial \gp^\ast}{\partial \bsmax}
=
D_0[
\frac{\partial \gp^\ast}{\partial \bsmax}\bigl(\bsmax-D(\gp^\ast)+\spc -\gp^\ast D'(\gp^\ast)\bigr)
+
\gp^\ast
].
$$
Rearranging and using
$$
\bsmax-D(\gp^\ast)+\spc=\frac{r_0 D(\gp^\ast)}{D_0\gp^\ast}
$$
from the equilibrium condition above gives
$$
\frac{\partial \gp^\ast}{\partial \bsmax}
=
-\frac{D_0(\gp^\ast)^2}{r_0 D(\gp^\ast)-\gp^\ast(r_0+D_0\gp^\ast)D'(\gp^\ast)}.
$$

Now use $Q_u^\ast=D(\gp^\ast)$, so
$$
m_{\text{user}}
=
\frac{\partial Q_u^\ast}{\partial \bsmax}
=
-\frac{D_0(\gp^\ast)^2D'(\gp^\ast)}
{r_0 D(\gp^\ast)-\gp^\ast(r_0+D_0\gp^\ast)D'(\gp^\ast)}.
$$

Differentiating $m_{\text{user}}$ with respect to $\gp^\ast$ gives
$$
\frac{\partial m_{\text{user}}}{\partial \gp^\ast}
=
-\frac{D_0\gp^\ast r_0\Bigl(\gp^\ast D(\gp^\ast)D''(\gp^\ast)+2D(\gp^\ast)D'(\gp^\ast)-2\gp^\ast(D'(\gp^\ast))^2\Bigr)}
{\Bigl(r_0 D(\gp^\ast)-\gp^\ast(r_0+D_0\gp^\ast)D'(\gp^\ast)\Bigr)^2}.
$$
Multiplying by $\partial \gp^\ast/\partial \bsmax$ and using the formula above yields
$$
\frac{\partial m_{\text{user}}}{\partial \bsmax}
=
\frac{D_0^2(\gp^\ast)^3 r_0\Bigl(\gp^\ast D(\gp^\ast)D''(\gp^\ast)+2D(\gp^\ast)D'(\gp^\ast)-2\gp^\ast(D'(\gp^\ast))^2\Bigr)}
{\Bigl(r_0 D(\gp^\ast)-\gp^\ast(r_0+D_0\gp^\ast)D'(\gp^\ast)\Bigr)^3}.
$$
The denominator is positive, so the sign is determined by
$$
\gp^\ast D(\gp^\ast)D''(\gp^\ast)+2D(\gp^\ast)D'(\gp^\ast)-2\gp^\ast(D'(\gp^\ast))^2.
$$
Thus, whenever
$$
\gp^\ast D(\gp^\ast)D''(\gp^\ast)+2D(\gp^\ast)D'(\gp^\ast)-2\gp^\ast(D'(\gp^\ast))^2<0,
$$
we have
$$
\frac{\partial m_{\text{user}}}{\partial \bsmax}<0.
$$

For the linear demand curve $D(\gp)=D_0-\beta\gp$, we have $D'(\gp)=-\beta$ and $D''(\gp)=0$, so
$$
\gp D(\gp)D''(\gp)+2D(\gp)D'(\gp)-2\gp(D'(\gp))^2
=
-2\beta D_0
<0.
$$
Therefore, the sufficient condition above holds for the linear demand function.
\end{proof}

\begin{remark}
\label{rem:mmus-general-demand}
The condition in \cref{prop:mmus-general} does not hold for every decreasing demand curve.
For example, let
$$
D(\gp)=A\exp(-\frac{1-(1+\gp)^{-2}}{2}).
$$
Then $D'(\gp)=-D(\gp)/(1+\gp)^3<0$, so the demand curve is strictly decreasing.
A direct calculation gives
\[
\gp D(\gp)D''(\gp)+2D(\gp)D'(\gp)-2\gp(D'(\gp))^2
=
A^2 e^{-1+(1+\gp)^{-2}}
\frac{\gp^3-4\gp-2}{(1+\gp)^6}.
\]
This expression is positive for all sufficiently large $\gp$.
Therefore, the sufficient condition in \cref{prop:mmus-general} can fail even for a smooth strictly decreasing demand curve.
\end{remark}

\begin{remark}[The condition that the slope of $m_{\text{user}}$ decreasing in $\bsmax$]
Let
\[
A(\gp)=r_0 D(\gp)-\gp(r_0+D_0\gp)D'(\gp),
\]
and
\[
F(\gp)=\gp D(\gp)D''(\gp)+2D(\gp)D'(\gp)-2\gp(D'(\gp))^2.
\]
We have
\[
\frac{\partial m_{\text{user}}}{\partial \bsmax}
=
\frac{D_0^2(\gp^\ast)^3 r_0\,F(\gp^\ast)}{A(\gp^\ast)^3}.
\]
Since
\[
\frac{\partial \gp^\ast}{\partial \bsmax}<0,
\]
the condition for \(\partial m_{\text{user}}/\partial \bsmax\) to be non-increasing is
\[
\frac{\mathrm d}{\mathrm d\gp}
(
\frac{\gp^3 F(\gp)}{A(\gp)^3}
)
\ge 0.
\]
It holds for the linear demand function.
\end{remark}
\begin{figure*}[t]
    \centering
    \begin{subfigure}[h]{0.48\linewidth}
        \centering
        \includegraphics[width=\linewidth]{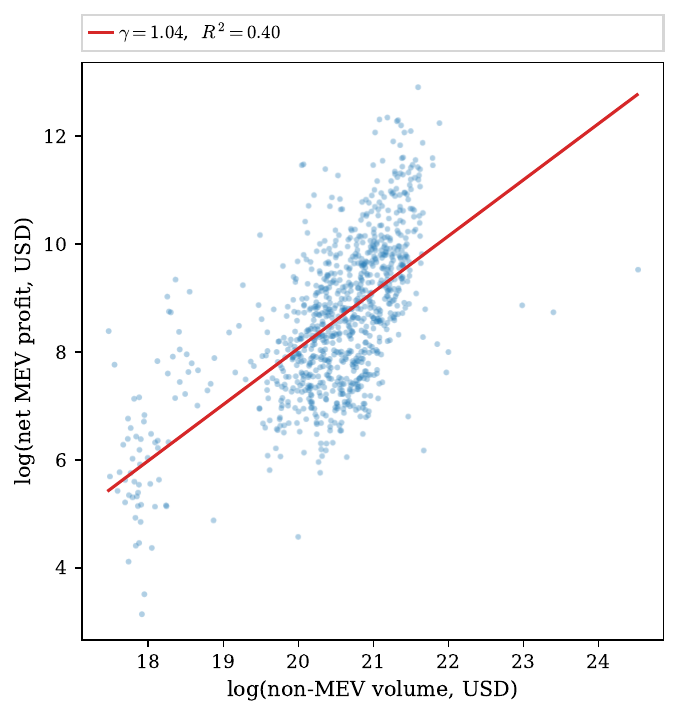}
        \caption{Base. The estimated elasticity is $\hat{\gamma}=1.04$ with $R^2=0.40$.}
        \label{fig:mev-elasticity-base}
    \end{subfigure}
    \hfill
    \begin{subfigure}[h]{0.5\linewidth}
        \centering
        \includegraphics[width=0.96\linewidth]{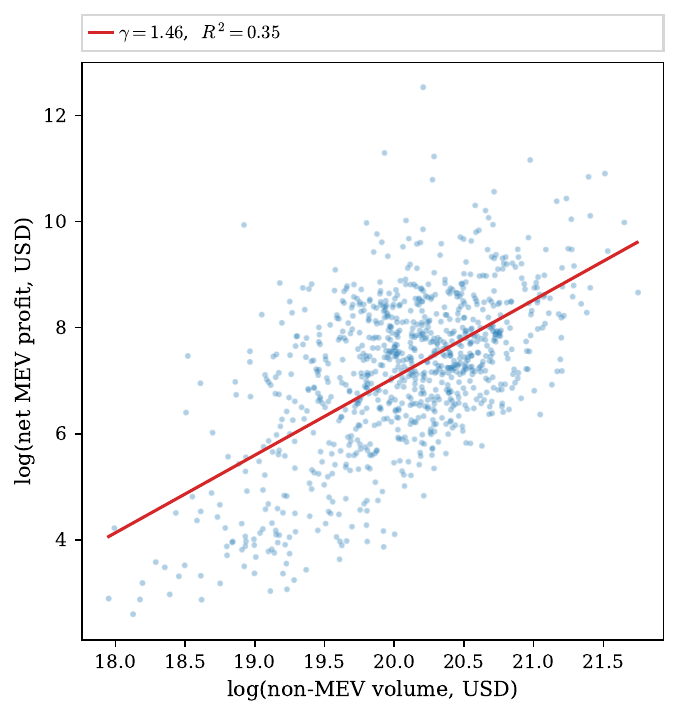}
        \caption{Arbitrum. The estimated elasticity is $\hat{\gamma}=1.46$ with $R^2=0.35$.}
        \label{fig:mev-elasticity-arbitrum}
    \end{subfigure}
    \caption{Daily net cyclic arbitrage profit versus daily non-MEV trading volume on log-log scales.}
    \label{fig:mev-elasticity}
\end{figure*}

\subsection{Proof of \texorpdfstring{\cref{prop:mu-user-gmin}}{Proposition (User Share under Lower Gas Price Floor)}}
\begin{proof}
In the no-entry region, $\spm^\ast=0$, so lowering the gas price floor changes only user inclusion.
Therefore the full marginal increase in used capacity goes to users, and $\mu_{\text{user}}=1$.
When the floor binds, no entry means
$$
\frac{r_0 D(\gpmin)}{D_0}\le \spc\,\gpmin.
$$
Under the linear demand function $D(\gp)=D_0-\beta\gp$, this is equivalent to
$$
\gpmin\ge \frac{r_0 D_0}{D_0\spc+\beta r_0}.
$$
Together with the condition that the floor binds, $\gpmin\ge \frac{D_0-\bsmax}{\beta}$, this gives the first case.

In the slack-with-spam region, we have
$$
Q_u^\ast(\bsmax,\gpmin)=D(\gpmin)=D_0-\beta\gpmin
$$
and
$$
\spc\,\spm^\ast(\bsmax,\gpmin)
=
\spc(\frac{r_0 D(\gpmin)}{D_0\spc\,\gpmin}-1)
=
\frac{r_0 D(\gpmin)}{D_0\gpmin}-\spc.
$$
Under the linear demand function, this simplifies to
$$
\spc\,\spm^\ast(\bsmax,\gpmin)
=
\frac{r_0(D_0-\beta\gpmin)}{D_0\gpmin}-\spc
=
\frac{r_0}{\gpmin}-\frac{\beta r_0}{D_0}-\spc.
$$
Differentiating with respect to $\gpmin$ gives
$$
\frac{\partial Q_u^\ast}{\partial \gpmin}=-\beta
\qquad\text{and}\qquad
\frac{\partial (\spc\,\spm^\ast)}{\partial \gpmin}=-\frac{r_0}{\gpmin^2}.
$$
Substituting these derivatives into the definition of $\mu_{\text{user}}$ yields
$$
\mu_{\text{user}}(\bsmax,\gpmin)
=
\frac{\beta}{\beta+r_0/\gpmin^2}
=
\frac{\beta\,\gpmin^2}{\beta\,\gpmin^2+r_0}.
$$
Differentiating this expression gives
$$
\frac{\partial \mu_{\text{user}}(\bsmax,\gpmin)}{\partial \gpmin}
=
\frac{2\beta r_0\,\gpmin}{(\beta\gpmin^2+r_0)^2}>0.
$$
Therefore $\mu_{\text{user}}$ increases with $\gpmin$, or equivalently decreases as the gas price floor is lowered.

Finally, in the congested-with-spam region, the equilibrium price is already above the gas price floor.
A small change in $\gpmin$ does not change the local equilibrium allocation, so both the user-gas increment and the spam-gas increment are zero.
Thus the denominator in the definition of $\mu_{\text{user}}$ vanishes, and the local share is not informative in that region.
\end{proof}

\section{Data and methodology}
\label{app:data}

We collect 912 daily observations per chain from January~2024 through June~2026 using Dune Analytics. To classify spam, we build on the methodology employed by Flashbots~\cite{miller-mev-scaling-2025} and consistent with the optimistic probing behavior described by Solmaz et al.~\cite{solmaz2025optimistic}. We identify contracts that repeatedly probe for MEV opportunities that never trigger meaningful state changes, which we detect through the absence of any ERC-20 token transfer.

The key observable is the ERC-20 transfer event. A successful MEV action, whether an arbitrage swap or a liquidation, emits at least one transfer, whereas a failed probe emits none. A probe can fail either because the transaction reverted or because it succeeded but did nothing, having read state, found no opportunity, and returned early.

Using on-chain traces, we capture two forms of probing, corresponding to the cyclic arbitrage and liquidation probing strategies introduced in \cref{sec:intro}, each anchored on an internal call that reveals intent. Arbitrage transactions make an internal call to a known DEX pool or router address, attempting a cyclic swap. Liquidation transactions make an internal staticcall to a curated set of Chainlink price-feed proxies, reading the oracle and usually aborting before liquidating. For each form, contract, and month, we compute the fraction of anchoring transactions that produced a transfer. Contracts with a transfer rate below 50\% and at least 10 interactions are classified as spam candidates. We rank candidates by total gas consumed and conservatively retain only the top 100 per form per month. We then attribute all daily transactions from these contracts to spam. Contracts qualifying under both forms are counted once in the combined total but attributed to each form separately.
Non-spam gas is the total gas minus spam gas.

\section{MEV Revenue Scaling with User Volume}
\label{app:empirical-opportunity}

In this appendix, we estimate how the opportunity value $r$ scales with user demand in deployed systems.
This matters for the scaling analysis in the main body.
If MEV opportunity size grows with user activity, then the amount of spam that can be sustained at equilibrium may also grow with adoption.

\paragraph{Methodology.}
We study cyclic arbitrage on Ethereum Layer~2 rollups.
Our main case study is Base, and we also report corresponding estimates for Arbitrum.
We use daily observations and identify cyclic arbitrage transactions from decoded DEX swap data on Dune Analytics.
A transaction is classified as cyclic arbitrage if it contains at least three swaps that form a cycle, meaning that the first token sold is also the last token bought, and the transaction yields a positive net balance in exactly that token.
We exclude transactions routed through known DEX aggregator contracts, since these are more likely to reflect user-initiated routing rather than MEV.
For each arbitrage transaction, we compute gross profit from the net token balance and subtract the gas fee to obtain net profit.
We then aggregate these net profits by day and use the resulting daily total as a proxy for the available opportunity value $r$.
This proxy is imperfect, but it is natural in a competitive setting, since realized arbitrage profits should track the amount of extractable value available to searchers.

To measure user demand, we use daily non-MEV trading volume in USD and estimate the log-log regression
\begin{equation*}
\log(P_t)=\alpha+\gamma \log(V_t),
\end{equation*}
where $P_t$ is daily net cyclic arbitrage profit and $V_t$ is daily non-MEV trading volume.
The coefficient $\gamma$ is the elasticity of realized cyclic arbitrage profit with respect to user trading volume.

\paragraph{Base.}
We first analyze Base using 912 daily observations from January~2024 through June~2026.
The estimate is $\hat{\gamma}=1.04$ with standard error $0.043$, $R^2=0.40$, and $p\approx 0$.
Thus, on Base, a $1\%$ increase in user trading volume is associated with roughly a $1.04\%$ increase in cyclic arbitrage profit, which is close to linear.

\paragraph{Arbitrum.}
We next apply the same methodology to Arbitrum.
For Arbitrum, we estimate $\hat{\gamma}=1.46$ with standard error $0.066$ and $R^2=0.35$.
The point estimate exceeds one and is higher than on Base, indicating that realized cyclic arbitrage profit on Arbitrum grows super-linearly with user trading volume, with a fit comparable to Base.

\end{document}